\newtheorem{thm}{Theorem}
\title{Duality between Approximate Bayesian Methods and Prior Robustness}
\author{Chaitanya Joshi, Fabrizio Ruggeri}
\begin{document}
\date{April 1, 2020}
\maketitle

\begin{abstract}
In this paper we show that there is a link between approximate Bayesian methods and prior robustness. We show that what is typically recognized as an approximation to the likelihood, either due to the simulated data as in the Approximate Bayesian Computation (ABC) methods or due to the functional approximation to the likelihood, can instead also be viewed upon as an implicit exercise in prior robustness. We first define two new classes of priors for the cases where the sufficient statistics is available, establish their mathematical properties and show, for a simple illustrative example, that these classes of priors can also be used to obtain the posterior distribution that would be obtained by implementing ABC.  We then generalize and define two further classes of priors that are applicable in very general scenarios; one where the sufficient statistics is not available and another where the likelihood is approximated using a functional approximation. We then discuss the interpretation and elicitation aspects of the classes proposed here as well as their potential applications and possible computational benefits. These classes establish the duality between approximate Bayesian inference and prior robustness for a wide category of Bayesian inference methods.\\

\noindent \textbf{Keywords:} Prior robustness; Approximate Bayesian Computation (ABC); Approximate Bayesian methods.

\end{abstract}

\section{Introduction}

Bayesian analysis on complex models often involves approximations to the likelihood, either because it is necessary or else because it is convenient or computationally efficient. Some of the more widely known approximation methods include the Approximate Bayesian Computation (ABC) methods (\cite{pritchard99}, \cite{marin12}), but also the Integrated Nested Laplace Approximation (INLA) (\cite{rue09}), Variational Bayes methods (\cite{jordan99}), psuedo-likelihood (\cite{besag75}), synthetic likelihood (\cite{wood10}) and Expectation Propagation (\cite{minka01}). A significant research effort has been targeted at developing modifications to improve the approximations (see, for example, \cite{marjoram03}, \cite{marin12},   \cite{fernhead12}, \cite{martin13}) as well as studying how such an approximation may affect the accuracy of the posterior distribution thus obtained (\cite{geweke04}, \cite{cook06}, \cite{fernhead12}, \cite{prangle14}, \cite{yao18} and \cite{lee19}).

In this paper, we propose a different way of looking at such approximations. We show that there is a duality between distortion to the likelihood and distortion to the prior. Being prepared to admit a distorted (approximated) likelihood is equivalent to being  prepared to accept a distorted posterior. We argue then that the same distorted posterior can, in fact, be obtained by using the \emph{true} likelihood and a distorted prior distribution instead. That is, for every distorted likelihood, there exists a distorted prior distribution in the sense that 
$$ \mbox{\emph{distorted likelihood}} \times \mbox{true prior} \propto \mbox{true likelihood} \times \mbox{\emph{distorted prior}}.$$  Therefore, an exercise in approximate Bayesian methods can instead be viewed as an exercise in Bayesian prior robustness.

We start by focusing on the ABC methods and show how implementing an ABC approach is equivalent to instead performing a prior robustness analysis, where one uses prior distributions from the new classes of priors that we will define. We start by assuming that the sufficient statistic is available and then generalize to a more practical situation where it isn't. We then sketch a further generalization to show how a class of prior distributions can be established for other approximation methods that instead use functional approximations to the likelihood.

Let $\pi(\mathbf{\theta}),\, \mathbf{\theta} \in \Theta \subseteq \mathbb{R}^{n},$ be an $n$-dimensional prior distribution, $f(\mathbf{x}|\mathbf{\theta})$ the likelihood having observed a given data $\mathbf{x}$ and $\pi(\mathbf{\theta}|\mathbf{x})$ be the corresponding posterior distribution. So that, we have,
\begin{equation*}
\pi(\mathbf{\theta}|\mathbf{x}) \propto \pi(\mathbf{\theta}) f(\mathbf{x}|\mathbf{\theta}) \propto \pi(\mathbf{\theta}) g(s(\mathbf{x})|\mathbf{\theta}),
\end{equation*}
where $S(\cdot)$ is a sufficient statistics for $\mathbf{\theta}$ and using the factorization theorem.

For observed data $\mathbf{x^{o}},$ a typical ABC implementation accepts a particular sample value of $\theta'$ if the sufficient statistics of the  data $\mathbf{x'}$ simulated using $\theta'\sim \pi(\mathbf{\theta})$ is within an $\epsilon$-neighborhood of $s(\mathbf{x^{o}}), \; \epsilon > 0.$ That is, $s(\mathbf{x'}) \in (s(\mathbf{x^{o}}) - \epsilon, s(\mathbf{x^{o}})+\epsilon).$ In this paper, we show that this process can be viewed as an implicit exercise in prior robustness. This is because the posterior distribution obtained using $\mathbf{x'}$ can also be obtained by using the original data $\mathbf{x^{0}}$ and a new prior $\pi'$. We derive two new classes of priors: one for the general case applicable to any likelihood and prior and one specifically when the likelihood belongs to the exponential family and the prior is from a conjugate distribution. We call these classes the \emph{ABC class of priors} and the  \emph{ABC-E class of priors} respectively.

The rest of the paper is organized as follows. In Section \ref{abc-G}, we establish the ABC class of priors, discuss its mathematical properties and illustrate using two simple examples. In Section \ref{abc-E} we establish the ABC-E class of priors and its link with the ABC class of priors and also illustrate using examples. Section \ref{algo} provides importance sampling algorithms to sample from the prior distributions that belong to the ABC class of priors as well as another one to sample from the posterior distribution obtained when using the prior distribution from the ABC class. Here, we also show how the posterior obtained using an ABC method can also be obtained  by instead sampling from the posterior distributions corresponding to the prior distributions from the ABC class. We then generalize to define two further classes of priors in Section \ref{general}: the ABC-G class when the sufficient statistics is not available and the AB class when other approximate Bayesian methods that employ functional approximations to the likelihood are used. Finally, in Section 6, we conclude with a discussion on the elicitation and interpretation of the new classes of priors as well as their possible applications and possible computational advantages.


\section{ABC class of priors: General case} \label{abc-G}

For simulated data $\mathbf{x'},$ we have that
\begin{equation*} \label{Eq2-1}
\pi(\mathbf{\theta}|\mathbf{x'}) \propto \pi(\mathbf{\theta}) g(s(\mathbf{x'})|\mathbf{\theta}).
\end{equation*}
Theorem \ref{thm1} shows that the same posterior can be obtained by using the observed data $\mathbf{x^{0}}$ and a different prior distribution $\dot{\pi}$.
\begin{thm}\label{thm1}
For one-dimensional $\mathbf{\theta}$ and $s(\mathbf{x})$ and for simulated data  $\mathbf{x'},\; \pi(\mathbf{\theta}|\mathbf{x'}) \propto \dot{\pi}(\mathbf{\theta}) g(s(\mathbf{x^{0}})|\mathbf{\theta}).$ 
\end{thm}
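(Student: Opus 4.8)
The plan is to produce the distorted prior $\dot{\pi}$ explicitly by exploiting the proportionality that underlies the duality stated in the introduction, namely that distorted likelihood times true prior is proportional to true likelihood times distorted prior. Starting from the factorized posterior for the simulated data, $\pi(\mathbf{\theta}|\mathbf{x'}) \propto \pi(\mathbf{\theta}) g(s(\mathbf{x'})|\mathbf{\theta})$, I would multiply and divide the right-hand side by $g(s(\mathbf{x^{0}})|\mathbf{\theta})$, the true likelihood evaluated at the observed statistic. This is legitimate wherever that quantity is positive, and it rewrites the unnormalized posterior as $\left[\pi(\mathbf{\theta})\, g(s(\mathbf{x'})|\mathbf{\theta})/g(s(\mathbf{x^{0}})|\mathbf{\theta})\right] g(s(\mathbf{x^{0}})|\mathbf{\theta})$. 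The trailing factor is precisely the true likelihood at $\mathbf{x^{0}}$, while the bracketed factor is the candidate distorted prior.

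First I would define
$$\dot{\pi}(\mathbf{\theta}) = c^{-1}\, \pi(\mathbf{\theta})\, \frac{g(s(\mathbf{x'})|\mathbf{\theta})}{g(s(\mathbf{x^{0}})|\mathbf{\theta})}, \qquad c = \int_{\Theta} \pi(\mathbf{\theta})\, \frac{g(s(\mathbf{x'})|\mathbf{\theta})}{g(s(\mathbf{x^{0}})|\mathbf{\theta})}\, d\mathbf{\theta}.$$
Then I would verify that $\dot{\pi}$ is a genuine prior: it is nonnegative because $\pi$ and $g$ are nonnegative, and it integrates to one by construction. Substituting this definition back yields $\dot{\pi}(\mathbf{\theta})\, g(s(\mathbf{x^{0}})|\mathbf{\theta}) = c^{-1}\, \pi(\mathbf{\theta})\, g(s(\mathbf{x'})|\mathbf{\theta}) \propto \pi(\mathbf{\theta}|\mathbf{x'})$, with the constant $c^{-1}$ absorbed into the proportionality sign. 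This is exactly the claimed identity.

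The main obstacle is not the algebra, which is essentially a reweighting, but ensuring that $\dot{\pi}$ is well defined and proper. This requires the normalizing constant $c$ to be finite and strictly positive, which in turn demands that $g(s(\mathbf{x^{0}})|\mathbf{\theta})$ be positive on the support of $\pi(\mathbf{\theta})\, g(s(\mathbf{x'})|\mathbf{\theta})$, and that the likelihood ratio $g(s(\mathbf{x'})|\mathbf{\theta})/g(s(\mathbf{x^{0}})|\mathbf{\theta})$ remain integrable against $\pi$. I would state these as the regularity conditions under which the theorem holds; they can fail when the observed statistic falls outside the effective support of the model for some parameter values, in which case the reweighting renders $\dot{\pi}$ improper. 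The restriction to one-dimensional $\mathbf{\theta}$ and $s(\mathbf{x})$ plays no essential role in this derivation and serves only to fix notation for the leading example, so I would note that the same construction extends directly to the multivariate setting.
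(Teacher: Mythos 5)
Your proof is correct, and it takes a genuinely different (and more direct) route than the paper. You define $\dot{\pi}(\mathbf{\theta}) \propto \pi(\mathbf{\theta})\, g(s(\mathbf{x'})|\mathbf{\theta})/g(s(\mathbf{x^{0}})|\mathbf{\theta})$ by the multiply-and-divide trick, which establishes the claimed proportionality exactly. The paper instead writes $g(s(\mathbf{x'})|\mathbf{\theta}) = \exp[\log g(s(\mathbf{x'})|\mathbf{\theta})]$ and Taylor-expands the log-likelihood in $s(\mathbf{x})$ around $s(\mathbf{x^{0}})$, obtaining $\dot{\pi}(\mathbf{\theta}) \propto \pi(\mathbf{\theta}) \exp\bigl[\frac{g'(s(\mathbf{x^{0}})|\mathbf{\theta})}{g(s(\mathbf{x^{0}})|\mathbf{\theta})}\,(s(\mathbf{x'})-s(\mathbf{x^{0}})) + e\bigr]$ with the remainder $e$ retained. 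Since the exponent there is exactly $\log g(s(\mathbf{x'})|\mathbf{\theta}) - \log g(s(\mathbf{x^{0}})|\mathbf{\theta})$, the paper's $\dot{\pi}$ is identical to yours, so both proofs are valid and prove the same exact statement. What your route buys is economy and rigor: no expansion is needed, the dimension restriction is visibly irrelevant (as you note — the paper needs Theorem 2 for the multivariate case only because of its expansion), and you make explicit the propriety conditions (positivity of $g(s(\mathbf{x^{0}})|\mathbf{\theta})$ on the relevant support and integrability of the likelihood ratio against $\pi$) that the paper leaves implicit. What the paper's route buys is everything downstream: the Taylor form isolates $h(\theta) = g'(s(\mathbf{x^{0}})|\mathbf{\theta})/g(s(\mathbf{x^{0}})|\mathbf{\theta})$ and linearizes the dependence on the perturbation $t = s(\mathbf{x'})-s(\mathbf{x^{0}})$, so that dropping $e$ yields the tractable one-parameter family $\pi(\mathbf{\theta})\exp[h(\theta)\,t]/E_{\pi}[\exp[h(\theta)\,t]]$ defining the ABC class $\Gamma_{\epsilon}$, on which the paper's monotonicity, topological, weighted-band and likelihood-ratio-ordering results all rest; your exact ratio form, while cleaner for the theorem itself, does not directly deliver that parametric structure.
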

\begin{proof}
We have that, 
\begin{equation} \label{Eq2-2}
\pi(\mathbf{\theta}|\mathbf{x'}) \propto \pi(\mathbf{\theta}) g(s(\mathbf{x'})|\mathbf{\theta}) \propto \pi(\mathbf{\theta}) \exp[\log[g(s(\mathbf{x'})|\mathbf{\theta})]].
\end{equation}
Approximating $\log[g(s(\mathbf{x'})|\mathbf{\theta})]$ using Taylor's approximation around $\log[g(s(\mathbf{x^{0}})|\mathbf{\theta})]$ we get
\begin{eqnarray} \label{Eq2-3}
\nonumber \log[g(s(\mathbf{x'})|\mathbf{\theta})] & = & \log[g(s(\mathbf{x^{0}})|\mathbf{\theta})] + \frac{d \log[g(s(\mathbf{x})|\mathbf{\theta})]}{d s(\mathbf{x})}|_{\mathbf{x}=\mathbf{x^{0}}} [s(\mathbf{x'}) - s(\mathbf{x^{0}})] + e\\
& =& \log[g(s(\mathbf{x^{0}})|\mathbf{\theta})] + \frac{g'(s(\mathbf{x^{0}})|\mathbf{\theta})}{g(s(\mathbf{x^{0}})|\mathbf{\theta})}[s(\mathbf{x'}) - s(\mathbf{x^{0}})] + e,
\end{eqnarray}

where, $ e =  o [s(\mathbf{x'}) - s(\mathbf{x^{0}})].$ Using \eqref{Eq2-2} and \eqref{Eq2-3},
\begin{eqnarray} \label{eq2-4}
\nonumber \pi(\mathbf{\theta}|\mathbf{x'}) & \propto & \pi(\mathbf{\theta}) g(s(\mathbf{x^{0}})|\mathbf{\theta})  \exp \left [\frac{g'(s(\mathbf{x^{0}})|\mathbf{\theta})}{g(s(\mathbf{x^{0}})|\mathbf{\theta})} [s(\mathbf{x'}) - s(\mathbf{x^{0}})]+ e \right ]\\
& \propto & \dot{\pi}(\mathbf{\theta}) g(s(\mathbf{x^{0}})|\mathbf{\theta}),
\end{eqnarray}
where, 
\begin{equation*} \label{Eq2-5}
\dot{\pi}(\mathbf{\theta}) \propto \pi(\mathbf{\theta}) \exp \left [\frac{g'(s(\mathbf{x^{0}})|\mathbf{\theta})}{g(s(\mathbf{x^{0}})|\mathbf{\theta})} [s(\mathbf{x'}) - s(\mathbf{x^{0}})] + e \right ].
\end{equation*}
\end{proof}

\noindent ABC involves accepting $\mathbf{\theta'} \sim \pi(\mathbf{\theta})$ if  $s(\mathbf{x'}) \in (s(\mathbf{x^{o}}) - \epsilon, s(\mathbf{x^{o}})+\epsilon).$  Note that $\frac{g'(s(\mathbf{x^{0}})|\mathbf{\theta})}{g(s(\mathbf{x^{0}})|\mathbf{\theta})}$ is a function of $\theta$ alone and let it be denoted by $h(\theta).$ Then, for any $|t| = |s(\mathbf{x'}) - s(\mathbf{x^{0}})| \leq \epsilon$ and after ignoring the remainder term $o(t),$  we have an approximation
\begin{eqnarray} \label{Eq2-5*}
\nonumber \pi'(\mathbf{\theta}) & \propto &  \pi(\mathbf{\theta}) \exp \left[ h(\theta) \times t \right],\\
\nonumber & = &  \frac{\pi(\mathbf{\theta}) \exp \left[ h(\theta) \times t \right]}{\int \pi(\mathbf{\theta}) \exp \left[ h(\theta) \times t \right]\, d\theta},\\
& = &  \frac{\pi(\mathbf{\theta}) \exp \left[ h(\theta) \times t \right]}{E_{\pi}\left[ \exp \left[ h(\theta) \times t \right] \right]\,},
\end{eqnarray}

\noindent Thus, for every accepted $\mathbf{\theta'},$ we can find $\pi'(\cdot)$ using \eqref{Eq2-5*}. We define the ABC class of priors for parameter $\epsilon$ as
\begin{equation} \label{Eq2-6}
\Gamma_{\epsilon} = \left\{\pi'(\cdot): \pi'(\mathbf{\theta},t)=\frac{\pi(\mathbf{\theta}) \exp \left[ h(\theta) \times t \right]}{E_{\pi}\left[ \exp \left[ h(\theta) \times t \right] \right]}, 0< |t| \leq \epsilon]\right
 \}.
\end{equation}
Note that since we have ignored the remainder term, $\Gamma_{\epsilon}$ is a class of approximate prior distributions. For $n$-dimensional $\mathbf{\theta}$ and $m$-dimensional $\mathbf{s}=\{s_{1},s_{2},\ldots,s_{m}\},\; m \geq n,$ a similar result can be obtained using the multivariate Taylor series approximation. We state this in Theorem \ref{thm12} without proof.\\

\begin{thm} \label{thm12}
For $n$-dimensional $\mathbf{\theta}, \; m$-dimensional $\mathbf{s}=\{s_{1},s_{2},\ldots,s_{m}\},\; m \geq n$ and for simulated data  $\mathbf{x'},\; \pi(\mathbf{\theta}|\mathbf{x'}) \propto \pi'(\mathbf{\theta}) g(s(\mathbf{x^{0}})|\mathbf{\theta}),$ where,
$$\pi'(\mathbf{\theta}) \propto \pi(\mathbf{\theta}) \exp \left [\sum_{k=1}^{m}\frac{g_{k}}{g(\mathbf{s}(\mathbf{x^{0}}),\mathbf{\theta})}[s_{k}(\mathbf{x'}) - s_{k}(\mathbf{x^{0}})]  + e \right ],$$
where $g_{k} = \frac{\partial \log g(\mathbf{s}(\mathbf{x^{0}},\mathbf{\theta}))}{\partial s_{k}(\mathbf{x^{0}}),\mathbf{\theta})} $ and $e = o[s_{k}(\mathbf{x'}) - s_{k}(\mathbf{x^{0}})].$
 \end{thm}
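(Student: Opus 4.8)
The plan is to reproduce the argument of Theorem~\ref{thm1} essentially verbatim, replacing the single-variable Taylor expansion by a first-order multivariate one. Writing $\pi(\mathbf{\theta}|\mathbf{x'}) \propto \pi(\mathbf{\theta})\, g(\mathbf{s}(\mathbf{x'})|\mathbf{\theta}) = \pi(\mathbf{\theta})\exp\!\left[\log g(\mathbf{s}(\mathbf{x'})|\mathbf{\theta})\right]$, I would treat $\log g(\cdot\,|\mathbf{\theta})$ as a scalar field on the $m$-dimensional statistic space and expand it about the observed value $\mathbf{s}(\mathbf{x^{0}})$. The first step is thus the expansion
\[
\log g(\mathbf{s}(\mathbf{x'})|\mathbf{\theta}) = \log g(\mathbf{s}(\mathbf{x^{0}})|\mathbf{\theta}) + \sum_{k=1}^{m} \frac{g_{k}}{g(\mathbf{s}(\mathbf{x^{0}}),\mathbf{\theta})}\,[s_{k}(\mathbf{x'}) - s_{k}(\mathbf{x^{0}})] + e,
\]
where each coefficient $\frac{g_{k}}{g} = \frac{\partial}{\partial s_{k}}\log g$ is the coordinatewise analogue of the factor $g'/g$ of the one-dimensional case, and $e$ gathers the second- and higher-order terms, so $e = o\!\left(\|\mathbf{s}(\mathbf{x'}) - \mathbf{s}(\mathbf{x^{0}})\|\right)$.

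Next I would substitute this into the exponential, factor $\exp[\log g(\mathbf{s}(\mathbf{x^{0}})|\mathbf{\theta})] = g(\mathbf{s}(\mathbf{x^{0}})|\mathbf{\theta})$ back out, and collect the remaining $\mathbf{\theta}$-dependent exponential together with the original prior. Exactly as in \eqref{eq2-4}, this gives
\[
\pi(\mathbf{\theta}|\mathbf{x'}) \propto \pi(\mathbf{\theta})\, g(\mathbf{s}(\mathbf{x^{0}})|\mathbf{\theta}) \exp\!\left[\sum_{k=1}^{m} \frac{g_{k}}{g(\mathbf{s}(\mathbf{x^{0}}),\mathbf{\theta})}\,[s_{k}(\mathbf{x'}) - s_{k}(\mathbf{x^{0}})] + e\right] \propto \pi'(\mathbf{\theta})\, g(\mathbf{s}(\mathbf{x^{0}})|\mathbf{\theta}),
\]
with $\pi'$ the claimed prior, which (paralleling \eqref{Eq2-5*}) I would normalize by dividing through by $E_{\pi}\!\left[\exp\!\left(\sum_{k} (g_{k}/g)[s_{k}(\mathbf{x'}) - s_{k}(\mathbf{x^{0}})]\right)\right]$, assumed finite so that $\pi'$ is a genuine density.

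The derivation is essentially formal, so the main obstacle is not the algebra but the analytic justification of the expansion and, in particular, the control of the remainder $e$. For each fixed $\mathbf{\theta}$ one only needs $g(\cdot\,|\mathbf{\theta})$ to be positive and continuously differentiable in $\mathbf{s}$ near $\mathbf{s}(\mathbf{x^{0}})$, which guarantees that the gradient exists and that $e$ is genuinely $o(\|\mathbf{s}(\mathbf{x'})-\mathbf{s}(\mathbf{x^{0}})\|)$. The more delicate point---already glossed over in the one-dimensional statement when the remainder is dropped---is that absorbing the exponential into a normalized prior implicitly asks the remainder to be negligible \emph{uniformly} in $\mathbf{\theta}$; establishing such uniformity, or simply discarding $e$ to pass to the approximate class $\Gamma_{\epsilon}$, is where any real work would lie. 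I would also remark that the hypothesis $m \geq n$ plays no role in this computation: it is a structural condition ensuring the sufficient statistic is informative about all $n$ components of $\mathbf{\theta}$, rather than anything the Taylor argument uses, which is presumably why the result is stated without proof.
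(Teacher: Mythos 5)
Your proposal is correct and follows exactly the route the paper intends: the paper states Theorem~\ref{thm12} without proof, remarking only that it follows from Theorem~\ref{thm1} ``using the multivariate Taylor series approximation,'' and your argument is precisely that coordinatewise generalization of the proof of Theorem~\ref{thm1}, including the same treatment (eventual discarding) of the remainder $e$ as in \eqref{Eq2-5*}. Your added observations---that $g_{k}/g$ should be read as $\partial \log g/\partial s_{k}$ in analogy with $g'/g$, that dropping $e$ implicitly requires uniformity in $\mathbf{\theta}$, and that $m \geq n$ is structural rather than used---are accurate glosses on points the paper leaves tacit.
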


In the multivariate case, the ABC class of priors is defined as
\begin{equation} \label{Eq2-7}
\Gamma_{\epsilon} = \left \{\pi'(\cdot): \pi'(\mathbf{\theta},t) =  \frac{\pi(\mathbf{\theta}) \exp \left[\sum_{k} h_{k}(\theta) \times t_{k} \right]}{E_{\pi}\left[ \exp \left[ \sum_{k}h_{k}(\theta) \times t_{k} \right] \right]}, 0 < |\mathbf{t}| \leq \mathbf{\epsilon}] \right \},
\end{equation}
where  $ h_{k} (\mathbf{\theta}) = \frac{g_{k}}{g(\mathbf{s}(\mathbf{x^{0}}),\mathbf{\theta})},\,\mathbf{t} = (t_{1},t_{2},\ldots,t_{m})$ and $\mathbf{\epsilon} = (\epsilon_{1},\epsilon_{2},\ldots,\epsilon_{m}).$\\

\subsection{Properties of $\Gamma_{\epsilon}$} \label{props}

\begin{figure}
\includegraphics [scale=0.51]{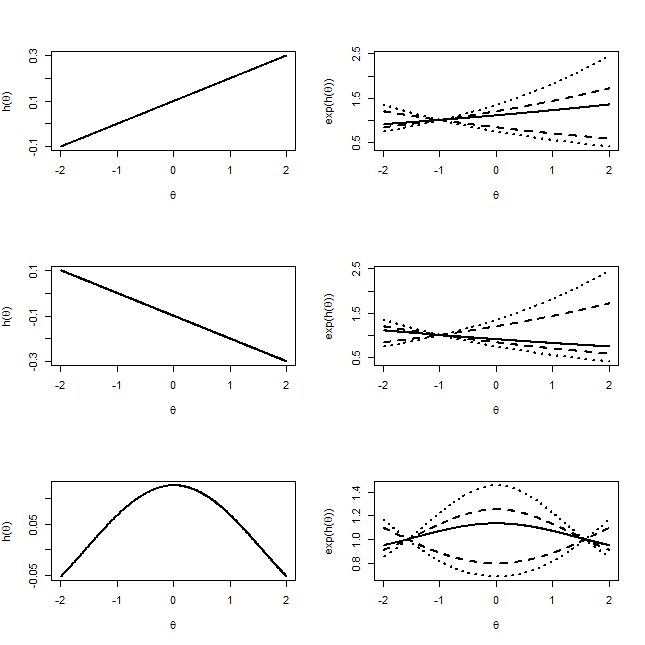}
\caption{Illustrates the monotonicity of $\exp[h(\mathbf{\theta}) \times t ]$ in $t$ for three different types of $h(\mathbf{\theta})$ (\emph{left column})--- \emph{increasing}, \emph{decreasing} and \emph{non-monotonic}. In each case, the \emph{right column} plots the function $\exp[h(\mathbf{\theta})]$  (\emph{solid line}), the functions $\exp[h(\mathbf{\theta}) \times \pm \epsilon ]$ for $\epsilon = \pm 1.8$  (\emph{dashed lines}) and the functions $\exp[h(\mathbf{\theta}) \times \pm \epsilon ]$ for $\epsilon = \pm 3$  (\emph{dotted lines}). }
\label{mono-fig}
\end{figure}

We now look at some of the properties of the ABC class of priors. For the sake of simplicity, we discuss the monotonicity and the topological properties for the one dimensional case only. However, it can be easily seen that they extend to the multivariate case also.\\

Figure \ref{mono-fig} illustrates that (irrespective of whether $h(\mathbf{\theta})$ is monotonic or not), the bands generated by $\exp[h(\mathbf{\theta}) \times \pm \epsilon ]$ are always monotonic in $\epsilon.$  This property is important because it allows us to show that $\Gamma_{\epsilon}$ is a class of prior distributions in the topological sense. As a side note, also observe that while the bands converge to $1$ as $g'(\theta)$ converges to $0,$ this is the derivative of $g$ w.r.t $s(\mathbf{x})$ taking value $0$ for some value/s of $\mathbf{\theta}$ and is merely a mathematical artifact.\\

\noindent \textbf{1. Monotonicity:} The ABC class of priors is monotonic in $\epsilon.$ For any $0<\epsilon_{1}<\epsilon_{2},$ $\pi'\in\Gamma_{\epsilon_1} \Rightarrow \pi'\in\Gamma_{\epsilon_2}$ and therefore $\Gamma_{\epsilon_1} \subset \Gamma_{\epsilon_2}.$ Also, notice that as $\epsilon \downarrow 0,\, \Gamma_{\epsilon} \downarrow \pi.$\\

\noindent \textbf{2. Topological properties:} An implication of monotonicity is that we can establish that $\Gamma_{\epsilon}$ forms a topological space of prior distributions. Let $0< e_{1} < e_{2} < \cdots <\epsilon$ be an increasing sequence of arbitrary length and let $\Gamma_{e_1},\Gamma_{e_2},\cdots$ be the ABC classes of priors defined for the corresponding threshold levels. Since $\pi' \in \Gamma_{e_i} \Rightarrow \pi' \in \Gamma_{e_j},\, \forall i,j: e_{i}<e_{j},$ it can be seen that $\Gamma_{e_1} \subset \Gamma_{e_2} \subset \cdots \subset \Gamma_{\epsilon}.$ This implies that $ \bigcup_{e_i<\epsilon} \Gamma_{e_i} = \Gamma_{\epsilon}.$ Also, for any $e_1<e_2<\cdots<e_n,\, \Gamma_{e_1} \bigcap \Gamma_{e_2} \bigcap \cdots \bigcap \Gamma_{e_n} = \Gamma_{e_1}.$ 

\begin{thm} \label{thm13}
The ABC class of priors $\Gamma_{\epsilon}$ is a topological space.
\end{thm}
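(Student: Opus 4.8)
The plan is to exhibit an explicit topology on the underlying set $X=\Gamma_{\epsilon}$ and then verify the three topology axioms, using the monotonicity and the union/intersection identities already recorded in Section~\ref{props} as the engine. The natural candidate is the nested family itself: take
\[
\tau = \{\emptyset\} \cup \{\Gamma_{e} : 0 < e \leq \epsilon\},
\]
so that the open sets are precisely the empty set together with the subclasses $\Gamma_{e}$ indexed by the threshold $e$. By monotonicity these subclasses form a chain, $\Gamma_{e_1}\subset\Gamma_{e_2}$ whenever $e_1<e_2$, and the top element $\Gamma_{\epsilon}=X$ lies in $\tau$ (take $e=\epsilon$), so the two boundary requirements $\emptyset,X\in\tau$ hold by construction.

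Two of the three axioms are then immediate from the properties already established. Closure under finite intersections follows from the identity $\Gamma_{e_1}\cap\cdots\cap\Gamma_{e_n}=\Gamma_{\min_i e_i}$: since the family is a chain, a finite intersection collapses to its smallest member, which is again some $\Gamma_{e'}\in\tau$, and intersecting with $\emptyset$ returns $\emptyset\in\tau$. For unions, any subfamily $\{\Gamma_{e_\alpha}\}$ whose index set has a maximum $e^{*}$ has union $\Gamma_{e^{*}}\in\tau$, again because the sets are nested.

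The step that needs care, and the part I expect to be the main obstacle, is closure under \emph{arbitrary} unions when the index set has no maximum. Writing $s=\sup_\alpha e_\alpha$, I would need $\bigcup_\alpha \Gamma_{e_\alpha}=\bigcup_{e<s}\Gamma_{e}$ to equal a member of $\tau$; this is exactly the content of the identity $\bigcup_{e_i<\epsilon}\Gamma_{e_i}=\Gamma_{\epsilon}$ applied at the level $s$, so the plan is to argue that the same reasoning yields $\bigcup_{e<s}\Gamma_{e}=\Gamma_{s}$ for every $0<s\leq\epsilon$. The subtlety is that, under the definition $0<|t|\leq e$, each $\Gamma_{e}$ is the image of a \emph{closed}-ended parameter set $[-e,e]\setminus\{0\}$, whereas a union over $e<s$ recovers only the \emph{open}-ended set $(-s,s)\setminus\{0\}$, so the two endpoint priors $\pi'(\cdot,\pm s)$ are a priori missing. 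I would resolve this either by replacing the defining subclasses with their open-ended counterparts ($0<|t|<e$), for which the union identity is exact and the chain is genuinely closed under arbitrary unions, or by treating $\{\Gamma_{e}\}$ as a subbasis and letting $\tau$ be the topology it generates, in which case the verification reduces to the two axioms already checked.

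As a cleaner alternative worth recording, one can sidestep the union bookkeeping entirely by transporting a known topology through the parametrization. The map $t\mapsto\pi'(\cdot,t)$ sends $[-\epsilon,\epsilon]\setminus\{0\}$ onto $\Gamma_{\epsilon}$, and whenever $h(\theta)$ is such that this map is injective it is a bijection; declaring it a homeomorphism equips $\Gamma_{\epsilon}$ with the subspace topology pulled back from $\mathbb{R}$, which is automatically a topology and for which the nested subclasses $\Gamma_{e}$ are precisely the preimages of the symmetric intervals. This shows $\Gamma_{\epsilon}$ is a topological space outright, with the chain of Section~\ref{props} as its distinguished nested family of open sets.
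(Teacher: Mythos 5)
Your proposal follows the same overall strategy as the paper's own proof: declare the nested family $\{\Gamma_{e}: 0\leq e\leq \epsilon\}$ (with $\Gamma_{0}=\emptyset$) to be the collection of open sets and verify the topology axioms from the chain structure, using the identities $\Gamma_{e_1}\cap\cdots\cap\Gamma_{e_n}=\Gamma_{\min_i e_i}$ and the union property. Where you go beyond the paper is at the union axiom, and your instinct there is exactly right: the paper simply asserts $\bigcup_{e}\Gamma_{e}\in\mathcal{T}_{\Gamma}$, but with the closed-ended parametrization $0<|t|\leq e$ in \eqref{Eq2-6}, a union over an index family whose supremum $s$ is not attained yields $\{\pi'(\cdot,t): 0<|t|<s\}$, which omits the endpoint priors $\pi'(\cdot,\pm s)$ and is not of the form $\Gamma_{e}$ for any $e$; so the collection as literally defined is not closed under arbitrary unions, and the paper's proof is silently incomplete at precisely the step you isolate. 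All three of your repairs are sound: redefining the subclasses with open parameter windows $0<|t|<e$ makes the union identity exact whether or not the supremum is attained; treating the chain as a basis (it is closed under finite intersections) and passing to the generated topology merely adjoins the open-ended unions; and the transported-topology argument via $t\mapsto\pi'(\cdot,t)$ is the cleanest route and matches the paper's informal remark, following the theorem, that the topology on the $\pi'$ is ``the same as the topology on the Euclidean space.'' Your injectivity caveat for that last route is also well placed: the map fails to be injective exactly when $h(\theta)$ is $\pi$-a.s.\ constant, in which case $\Gamma_{\epsilon}=\{\pi\}$ and the statement is trivial, so the degenerate case is harmless. In short, your proof is correct, matches the paper's construction in outline, and is more careful than the paper at the one step where care is actually needed.
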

\begin{proof}
Given $\epsilon >0,\, \Gamma_{e},$ for any $0\leq e \leq \epsilon$ denotes a subset of $\Gamma_{\epsilon}.$ For $e=0,\, \Gamma_{0} = \emptyset.$ 

\noindent Let $\mathcal{T}_{\Gamma}$ be a collection of subsets of $\Gamma_{\epsilon}$ such that $\Gamma_{e} \in \mathcal{T}_{\Gamma}, \, \forall 0\leq e\leq \epsilon.$ Then  we have that $\emptyset \in \mathcal{T}_{\Gamma}$ and $\Gamma_{\epsilon} \in \mathcal{T}_{\Gamma}.$ Also $\bigcup_{e} \Gamma_{e} \in \mathcal{T}_{\Gamma}$ and, for any $e_1<e_2<\cdots<e_n,\, \Gamma_{e_1} \bigcap \Gamma_{e_2} \bigcap \cdots \bigcap \Gamma_{e_n} \in  \mathcal{T}_{\Gamma}.$ Thus $\mathcal{T}_{\Gamma}$ defines a topology on $\Gamma_{\epsilon}.$
\end{proof}

Although we are in the space of probability distributions, thanks to the way $\pi'$ are defined, the topology on $\pi'$ is the same as that on $\exp[h(\mathbf{\theta})\times t],$ which, in turn, is the same as the topology on the Euclidean space. Thus, $\Gamma_{\epsilon}$ inherits all the nice topological properties of the Euclidean space including compactness.\\

\noindent \textbf{3. Weighted band of priors:} We can show that this class is, in fact, a special case of a more general class of multivariate priors recently proposed, namely,  the \emph{weighted band of priors}. The use of weight functions to modify distributions is not new. Weighted distributions were first introduced by \cite{fisher34} and explicitly defined and studied by \cite{rao63} and \cite{rao85}. \cite{bayari98} introduced non-parametric classes of weight functions to study the robustness of the posterior functionals when the data sampling is modified using a weight function belonging to the said classes. Recently, \cite{ruggeri20} have introduced a new class of multivariate prior distributions based on the weighted distributions called the weighted band of priors. Let $\pi$ be a specific \emph{multivariate total positivity of order $2$} (MTP2) prior belief. Then the weighted band $\Gamma_{w_{1},w_{2},\pi}$ associated with $\pi$ based on $w_{1}$ and $w_{2}$, a decreasing weight function and an increasing weight function, respectively, is defined as 
$$ \Gamma_{w_{1},w_{2},\pi} = \{\pi': \pi_{w_{1}} \leq_{lr} \pi' \leq_{lr} \pi_{w_{2}}\},$$
where, 
$$\pi_{w_{1}} = \frac{\pi(\theta) w_{1}(\theta)}{E_{\pi}[w_{1}(\theta)]} \mbox{ and } \pi_{w_{2}} = \frac{\pi(\theta) w_{2}(\theta)}{E_{\pi}[w_{2}(\theta)]},\, \forall \theta \in \Theta \subseteq \mathbb{R}^{n}. $$

 Let $w_{\mathbf{t}}(\theta) = \exp \left[\sum_{k} h_{k}(\theta) \times t_{k} \right].$ Then, $\pi'(\mathbf{\theta})$ in \eqref{Eq2-7} can be written as 
\begin{equation} \label{Eq2-10}
\pi'(\mathbf{\theta})  =  \frac{\pi(\mathbf{\theta})w_{\mathbf{t}}(\theta)}{E_{\pi}\left[ w_{\mathbf{t}}(\theta)\right]}
\end{equation}
This shows that the ABC class of priors is a special case of the weighted band of priors where  each member of the class takes the above form and where the weight function is defined as above. So far, we have not needed to make the MTP2 assumption on $\pi$ for the ABC class, however, we will need to do so in order to establish the ordering of the members when $\mathbf{\theta}$ is multivariate.\\

\noindent Establishing that $\Gamma_{\epsilon}$ is a topological space is not sufficient to establish the ordering within the space. For example, we are not yet able to say that $\exp[h(\mathbf{\theta}) \times - \epsilon ]$ is (\emph{say}) the \emph{lower bound} and $\exp[h(\mathbf{\theta}) \times + \epsilon ]$  is (\emph{say}) the \emph{upper bound} of $\Gamma_{\epsilon}.$ To do so, we will start by recalling the definition of \emph{likelihood ratio ordering}.\\

\noindent{\textbf{4. Stochastic ordering:}  For one dimensional $\theta,$ let $\pi_{1}(\theta)$ and $\pi_{2}(\theta)$ be two densities that are absolutely continuous w.r.t. the Lebesgue measure. Then, $\pi_{1}(\theta)$ is said to be smaller than $\pi_{2}(\theta)$ in the likelihood ratio order sense --- and denoted $\pi_{1}(\theta) \leq _{lr} \pi_{2}(\theta)$ --- if the ratio $\pi_{2}(\theta)/\pi_{1}(\theta)$ increases over the union of supports of the two densities. This implies that $\pi_{2}(\theta)$ corresponds to a random variable that takes larger values  than the random variable that corresponds to $\pi_{1}(\theta).$ For further details on the likelihood ratio ordering, please refer to \cite{karlin80}, \cite{muller02} and \cite{shaked07}.

\begin{thm}\label{thm_lr}
For univariate $\theta,$ if $h(\theta)$ is monotonic in $\theta$ then the likelihood ratio order can be established inside the $\Gamma_{\epsilon}.$
\end{thm}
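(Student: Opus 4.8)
The plan is to compute the density ratio of two arbitrary members of $\Gamma_{\epsilon}$ and read off its monotonicity in $\theta$ directly from that of $h$. Since likelihood ratio ordering is defined precisely through the monotonicity of such a ratio, this reduces the whole statement to an elementary observation about the exponential function.

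First I would pick two members $\pi'_{t_1},\pi'_{t_2}\in\Gamma_{\epsilon}$ corresponding to perturbation parameters with $-\epsilon\leq t_1 < t_2\leq\epsilon$. By the definition \eqref{Eq2-6}, each has the form $\pi'_{t_i}(\theta)=\pi(\theta)\exp[h(\theta)\, t_i]/E_{\pi}[\exp[h(\theta)\, t_i]]$, so they share the common base density $\pi$ and hence the common support $\Theta$; this already collapses the ``union of supports'' clause in the definition of $\leq_{lr}$ to the single support of $\pi$. Next I would form the ratio
$$\frac{\pi'_{t_2}(\theta)}{\pi'_{t_1}(\theta)} = \frac{E_{\pi}[\exp[h(\theta)\, t_1]]}{E_{\pi}[\exp[h(\theta)\, t_2]]}\, \exp[h(\theta)\,(t_2-t_1)].$$
The leading factor is a ratio of two normalizing constants and is therefore independent of $\theta$, so it plays no role in the monotonicity of the overall ratio. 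Consequently the entire $\theta$-dependence is carried by the single factor $\exp[h(\theta)\,(t_2-t_1)]$.

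The crux is then immediate: because $t_2-t_1>0$ and $x\mapsto\exp(x)$ is strictly increasing, the composite $\theta\mapsto\exp[h(\theta)\,(t_2-t_1)]$ is increasing exactly when $h$ is increasing and decreasing exactly when $h$ is decreasing. Thus, when $h$ is increasing the ratio $\pi'_{t_2}/\pi'_{t_1}$ increases in $\theta$, giving $\pi'_{t_1}\leq_{lr}\pi'_{t_2}$, while when $h$ is decreasing the inequality reverses. In either case the family $\{\pi'_t\}$ is \emph{totally} ordered under $\leq_{lr}$, monotonically indexed by $t$, so the extreme members are attained at the endpoints $t=\pm\epsilon$. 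This identifies $\pi'_{-\epsilon}$ and $\pi'_{+\epsilon}$ as the two likelihood-ratio bounds of $\Gamma_{\epsilon}$, settling precisely the point flagged before the theorem (with the roles of ``lower'' and ``upper'' bound swapped according to the direction of $h$).

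I do not anticipate a genuine obstacle here; the argument is essentially a one-line computation once the normalizing constants are recognized as $\theta$-free. The only items demanding care are bookkeeping ones: keeping track of which endpoint is the $\leq_{lr}$-maximum under each monotonicity direction of $h$, and confirming that the base density $\pi$ (and hence the domain on which monotonicity is checked) is common to all members so that the ordering is genuinely intrinsic to the class rather than an artefact of differing supports. The monotonicity of $h$ is exactly the hypothesis that is needed and cannot be dropped, since a non-monotonic $h$ would make $\exp[h(\theta)\,(t_2-t_1)]$ non-monotonic and the two members incomparable in the $\leq_{lr}$ sense.
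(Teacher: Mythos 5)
Your proof is correct and follows essentially the same route as the paper's: both reduce the claim to the observation that $\pi'_{t_2}(\theta)/\pi'_{t_1}(\theta)\propto\exp\left[h(\theta)\,(t_2-t_1)\right]$, whose monotonicity in $\theta$ is inherited directly from that of $h$, so the class is totally ordered in $t$ with the endpoints $t=\pm\epsilon$ furnishing the $\leq_{lr}$ bounds (with roles swapped when $h$ is decreasing). Your single pairwise computation for arbitrary $t_1<t_2$ merely condenses the paper's four-case enumeration into one step, while making the $\theta$-free normalizing constants and the common support explicit --- a cosmetic, not substantive, difference.
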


\begin{proof} First, note that from \eqref{Eq2-6} and ignoring the normalizing constant, 
\begin{equation}\label{Eq2-8}
\frac{\pi'_{t}(\theta)}{\pi(\theta)} \propto \exp\left[ h(\theta) \times t \right],
\end{equation}
where, subscript $t$ is now used to explicitly denote the value of $t$ being considered for the sake of clarity.\\

\noindent  Let's first consider the case that $h(\theta)$ is monotonic increasing. Then \eqref{Eq2-8} implies that 
\begin{itemize}
\item{$\pi'_{t}(\theta)/\pi(\theta)$ is increasing $\forall t>0, \Rightarrow \, \pi'_{t}(\theta) \geq_{lr}\pi(\theta),$}
\item{$\pi'_{-t}(\theta)/\pi(\theta)$ is decreasing $\forall t>0, \Rightarrow \, \pi'_{-t}(\theta) \leq_{lr}\pi(\theta),$}
\item{$\pi'_{t_{2}}(\theta)/\pi'_{t_{1}}(\theta)$ is increasing $\forall t_{2}>t_{1}>0 \Rightarrow \, \pi'_{t_{2}}(\theta) \geq_{lr}\pi'_{t_{1}}(\theta)$ and substituting $t_{2} = \epsilon$ implies that $ \pi'_{\epsilon}(\theta) \geq_{lr}\pi'_{t}(\theta),\, \forall t<\epsilon,$ and }
\item{$\pi'_{t_{2}}(\theta)/\pi'_{t_{1}}(\theta)$ is decreasing $\forall t_{1}>t_{2} \Rightarrow \, \pi'_{t_{2}}(\theta) \leq_{lr}\pi'_{t_{1}}(\theta)$ and substituting $t_{2} = -\epsilon$ implies that $ \pi'_{-\epsilon}(\theta) \leq_{lr}\pi'_{t}(\theta),\, \forall t>-\epsilon.$ }
\end{itemize}
Thus, when $h(\theta)$ is monotonic increasing we can not only establish that $\pi'_{\epsilon}(\theta)$ and $ \pi'_{-\epsilon}(\theta)$ are the upper and the lower bounds of $\Gamma_{\epsilon},$ respectively, but also establish an ordering for all infinitely many elements within that class.\\

Similarly, it can be seen that when $h(\theta)$ is monotonic decreasing we can establish that $\pi'_{\epsilon}(\theta)$ and $ \pi'_{-\epsilon}(\theta)$ are the lower and the upper bounds of $\Gamma_{\epsilon},$ respectively and also establish an ordering for every element inside the class, only that the ordering will be exactly reversed in this case.
\end{proof}

For multivariate $\mathbf{\theta},$ however, it is not straightforward to show the likelihood ratio ordering and one needs to assume that the prior density $\pi$ is MTP2. This means that for any $\mathbf{\theta}, \mathbf{\theta'} \in \Theta, \, \pi$ satisfies the following condition: $$\pi(\mathbf{\theta})\times \pi(\mathbf{\theta'}) \leq \pi(\mathbf{\theta} \vee \mathbf{\theta'}) \times \pi(\mathbf{\theta} \wedge \mathbf{\theta'}),$$ where  $\vee$ and $\wedge$ denote the component-wise minimum and maximum respectively. Just like for the $lr$ order,  for further details on the MTP2 ordering, please refer to \cite{karlin80}, \cite{muller02} and \cite{shaked07}.\\

\cite{ruggeri20} prove (Lemma 2.8 in their paper) that if $\pi$ is a specific MTP2 prior distribution and $w$ is an increasing (decreasing) weight function, then $\pi \leq_{lr} (\geq_{lr}) \pi_{w}.$ We can use that result to prove the following.

\begin{thm} \label{thm_lr2}
For multivariate $\mathbf{\theta},$ let $\pi$ be a MTP2 prior distribution. If $h_{k}(\mathbf{\theta})$ is increasing in $\theta$ for each $k = 1,\ldots, n,$ or if $h_{k}(\mathbf{\theta})$ is decreasing in $\theta$ for each $k = 1,\ldots, n,$ then the likelihood ratio order can be established inside the $\Gamma_{\epsilon}.$
\end{thm}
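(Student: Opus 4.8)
The plan is to reduce every member of $\Gamma_{\epsilon}$ to the weighted form \eqref{Eq2-10}, namely $\pi'_{\mathbf{t}} = \pi_{w_{\mathbf{t}}}$ with $w_{\mathbf{t}}(\theta)=\exp[\sum_k h_k(\theta)t_k]$, and then transport the monotonicity of the $h_k$ through Lemma 2.8 of \cite{ruggeri20}. The first fact I would record is that when every $h_k$ is increasing in $\theta$, the weight $w_{\mathbf{t}}$ is an increasing function of $\theta$ whenever all $t_k\geq 0$ (a nonnegative combination of increasing functions, composed with the increasing map $\exp$) and a decreasing function of $\theta$ whenever all $t_k\leq 0$; the two cases swap when every $h_k$ is decreasing. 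This is the multivariate analogue of \eqref{Eq2-8} and is exactly what lets the sign pattern of $\mathbf{t}$ dictate the direction of the ordering.

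The second step compares each member with the base prior $\pi$. Since $\pi$ is MTP2 and $w_{\mathbf{t}}$ is increasing for $\mathbf{t}\geq\mathbf{0}$, Lemma 2.8 of \cite{ruggeri20} gives $\pi \leq_{lr} \pi_{w_{\mathbf{t}}} = \pi'_{\mathbf{t}}$; for $\mathbf{t}\leq\mathbf{0}$ the weight is decreasing and the same lemma yields $\pi'_{\mathbf{t}} \leq_{lr}\pi$. Taking $\mathbf{t}=\boldsymbol{\epsilon}$ and $\mathbf{t}=-\boldsymbol{\epsilon}$ already places $\pi'_{\boldsymbol{\epsilon}}$ above and $\pi'_{-\boldsymbol{\epsilon}}$ below $\pi$ in the likelihood ratio order (reversed when the $h_k$ are decreasing). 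This mirrors the first two bullets in the proof of Theorem \ref{thm_lr} and uses nothing beyond MTP2 of $\pi$ and monotonicity of the weight.

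For the full internal ordering — the analogue of the last two bullets of Theorem \ref{thm_lr}, confirming that $\pi'_{\boldsymbol{\epsilon}}$ and $\pi'_{-\boldsymbol{\epsilon}}$ are genuinely the extreme elements — I would compare two members $\pi'_{\mathbf{t}_1}$ and $\pi'_{\mathbf{t}_2}$ with $\mathbf{t}_1\leq\mathbf{t}_2$. The key identity is $\pi'_{\mathbf{t}_2} \propto \pi'_{\mathbf{t}_1}\, w_{\mathbf{t}_2-\mathbf{t}_1}$, so $\pi'_{\mathbf{t}_2}$ is itself a weighted version of $\pi'_{\mathbf{t}_1}$ with weight $w_{\mathbf{t}_2-\mathbf{t}_1}$, which is increasing because $\mathbf{t}_2-\mathbf{t}_1\geq\mathbf{0}$. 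Re-applying Lemma 2.8 of \cite{ruggeri20} with $\pi'_{\mathbf{t}_1}$ in the role of the base distribution then gives $\pi'_{\mathbf{t}_1}\leq_{lr}\pi'_{\mathbf{t}_2}$, and chaining these inequalities orders all members, with the conclusion reversed in the decreasing case.

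The hard part is the hypothesis required to re-apply the lemma in that last step: the lemma demands that its base distribution be MTP2, so I must check that each weighted member $\pi'_{\mathbf{t}_1}\propto \pi\, w_{\mathbf{t}_1}$ is again MTP2. Because MTP2 is preserved under multiplication by a log-supermodular factor, this reduces to verifying that $\sum_k h_k(\theta)t_{1k}$ is supermodular in $\theta$ — a property that monotonicity of the $h_k$ alone does not guarantee. I would therefore either establish log-supermodularity of $w_{\mathbf{t}_1}$ directly from the structure $h_k(\theta)=g_k/g(\mathbf{s}(\mathbf{x^{0}}),\theta)$, or attempt the bivariate TP2 inequality for $\pi'_{\mathbf{t}_1},\pi'_{\mathbf{t}_2}$ head-on; the latter splits, via the MTP2 inequality for $\pi$, into precisely the same supermodularity requirement on the weights. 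Absent that structural input, the argument still delivers the bounds relative to $\pi$ obtained in the second step, which is the portion resting solely on the stated monotonicity assumption.
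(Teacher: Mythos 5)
Your proposal follows the same route as the paper's own proof: write each member of $\Gamma_{\epsilon}$ in the weighted form \eqref{Eq2-10}, note that $w_{\mathbf{t}}(\mathbf{\theta})=\exp\left[\sum_{k}h_{k}(\mathbf{\theta})t_{k}\right]$ is increasing whenever all $h_{k}$ are increasing and $\mathbf{t}\geq\mathbf{0}$ (decreasing for $\mathbf{t}\leq\mathbf{0}$, with the cases swapped when the $h_{k}$ are decreasing), and invoke Lemma 2.8 of \cite{ruggeri20} to obtain $\pi'_{-\mathbf{\epsilon}}\leq_{lr}\pi\leq_{lr}\pi'_{\mathbf{\epsilon}}$. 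Up to that point you and the paper coincide exactly; the paper's proof consists of precisely these observations plus the assertion that $\mathbf{t}=\pm\mathbf{\epsilon}$ correspond to the bounds, delegating the internal ordering to ``arguments similar to the proof of Theorem \ref{thm_lr}.''

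The supermodularity obstruction you isolate in your final paragraph is genuine, and it is worth saying plainly that the paper's proof does not get past it either: it is a gap in the paper, not a defect of your argument. The univariate argument of Theorem \ref{thm_lr} transfers costlessly to pairwise comparisons because on $\mathbb{R}$ the $lr$ order is just the monotone-ratio condition (every function on $\mathbb{R}$ is trivially both super- and submodular). In the multivariate lattice definition, $f\leq_{lr}g$ iff $f(\mathbf{\theta})g(\mathbf{\theta'})\leq f(\mathbf{\theta}\wedge\mathbf{\theta'})g(\mathbf{\theta}\vee\mathbf{\theta'})$, and comparing two interior members $\pi'_{\mathbf{t}_{1}}\leq_{lr}\pi'_{\mathbf{t}_{2}}$ with $\mathbf{t}_{1}\leq\mathbf{t}_{2}$ reduces, after using MTP2 of $\pi$, to supermodularity of $\sum_{k}t_{1k}h_{k}$: monotonicity of the $h_{k}$ disposes only of the increment $\mathbf{t}_{2}-\mathbf{t}_{1}$, exactly as your decomposition $\pi'_{\mathbf{t}_{2}}\propto\pi'_{\mathbf{t}_{1}}w_{\mathbf{t}_{2}-\mathbf{t}_{1}}$ makes visible; equivalently, re-applying Lemma 2.8 with base $\pi'_{\mathbf{t}_{1}}$ requires $\pi\, w_{\mathbf{t}_{1}}$ to be MTP2, i.e.\ $\log w_{\mathbf{t}_{1}}$ supermodular. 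Note in addition that for mixed-sign $\mathbf{t}$ the weight $w_{\mathbf{t}}$ is not monotone at all, so even the comparison of an arbitrary interior member with $\pi'_{\pm\mathbf{\epsilon}}$ is not covered by the monotone-weight lemma alone. Closing the gap would require strengthening the hypotheses, e.g.\ demanding that each $h_{k}$ be modular (additively separable in the coordinates of $\mathbf{\theta}$, as happens for the linear $h_{i}(\mathbf{\beta})=\mathbf{\beta}'\mathbf{X}_{i}$ of the paper's Poisson regression illustration), in which case $\log w_{\mathbf{t}}$ is modular for every $\mathbf{t}$, all weighted members remain MTP2, and your chaining argument completes the full ordering.
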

\begin{proof} Given the result proved in Lemma 2.8 of \cite{ruggeri20}, all that is needed to be done is to establish the existence of the appropriate increasing and decreasing functions for $\Gamma_{\epsilon}.$ This is done using arguments similar to the proof of Theorem \ref{thm_lr}. First we assume that all $h_{k}'s$ are increasing. Then $\sum_{k} h_{k}(\mathbf{\theta}) \times t_{k}$ is increasing (decreasing) $\forall \mathbf{t} \geq \mathbf{0}$ ( $\forall \mathbf{t} \leq \mathbf{0}$) and that $\mathbf{t} = \mathbf{\epsilon}$ ($\mathbf{t} = \mathbf{-\epsilon}$ ) corresponds to the upper (lower) bound.\\
For the case where all $h_{k}'s$ are decreasing, a similar set of arguments can be made and the ordering will be in exact reverse order.
\end{proof}
%

\noindent \textbf{5. Kolmogorov distance:}
\noindent The Kolmogorov distance between $\pi$ and $\pi'$ (for a given $t$) is 
\begin{eqnarray} \label{Kol1}
\nonumber K(\pi,\pi'_{t}) &=& \sup_{\tau\in \Theta} \left|\int^{\tau} \pi(\mathbf{\theta})\,d\mathbf{\theta} - \int^{\tau}  \pi'_{t}(\mathbf{\theta})\,d\mathbf{\theta}\right| \\
\nonumber & = &  \sup_{\tau\in \Theta} \left|\int^{\tau} \pi(\mathbf{\theta})\,d\mathbf{\theta} - \int^{\tau} \frac{\pi(\mathbf{\theta})w_{t}(\theta)}{E_{\pi}\left[ w_{t}(\theta)^{t}\right]} \,d\mathbf{\theta}\right|\\
 & = &   \sup_{\tau\in \Theta} \left|\int^{\tau} \pi(\mathbf{\theta})\left ( 1 - \frac{w_{t}(\theta)^{t}}{E_{\pi}\left[ w_{t}(\theta)^{t}\right]}\right )\,d\mathbf{\theta}\right|
\end{eqnarray}
Note that $ 0 \le K(\pi,\pi') \le 1$ and $ K(\pi,\pi') \downarrow 0$ as $t \downarrow 0.$ \eqref{Kol1} provides the relationship between the distortion in the data and the distortion in the prior distributions. This could be exploited in two distinct ways. One could find the distance $K(\pi,\pi_{t})$ for a given $t$ to find the distance between prior distributions corresponding to a given $t$ (and thus, for example, the length of $\Gamma_{\epsilon}$ using $t=\pm \epsilon$). But on the other hand, it is also possible to find the maximum value of $t$ such that $ K(\pi,\pi')$ is below a certain upper bound. Therefore, \eqref{Kol1} could be used to elicit $\epsilon$ in a Bayesian analysis, based on the deviation in the prior distributions that we may be prepared to permit. We discuss this further in Section \ref{discuss}. \\

\subsection{Examples} \label{gen-ex}
\textbf{Example 1:} Consider $\mathbf{X} \sim N(\mu,\sigma^{2}),$ with $\mu$ unknown and  $\sigma$ known. Sufficient statistics for $\mu$ is $\bar{X}$. Having observed data $\mathbf{x^{o}},$ we have that
\begin{equation} \label{Ex-1-1}
g(\bar{x}^{0}|\mu) \propto \exp\left [ \frac{-n}{2 \sigma^{2}} (\mu - \bar{\mathbf{x^{0}}})^{2} \right ]  \mbox{  and  } g'(\bar{x}^{0}|\mu)  \propto \frac{n}{\sigma^{2}} (\mu - \bar{\mathbf{x^{0}}}) \exp\left [ \frac{-n}{2 \sigma^{2}} (\mu - \bar{\mathbf{x^{0}}})^{2} \right ].
\end{equation}
Thus, 
\begin{equation} \label{Ex-1}
\pi'(\mu,t) \propto \pi(\mu) \exp \left [ \frac{n}{\sigma^{2}} (\mu - \bar{\mathbf{x^{0}}})\, t  \right ],
\end{equation}
where, $t =  \bar{\mathbf{x^{'}}} -  \bar{\mathbf{x^{0}}}.$  The ABC class of priors in this case would be 
\begin{equation*}
\Gamma^{G}_{\epsilon} = \left\{\pi'(\cdot): \pi'(\mu,t) = \frac{\pi(\mu) \exp \left [\frac{n}{\sigma^{2}} (\mu - \bar{\mathbf{x^{0}}})\, t\right ]}{E_{\pi(\mu)} \left[ \exp \left [\frac{n}{\sigma^{2}} (\mu - \bar{\mathbf{x^{0}}})\, t\right ] \right ] }, t \in [-\epsilon, +\epsilon]\right\}.
\end{equation*}

\textbf{Example 2:} Consider $\mathbf{X} \sim Poisson(\lambda),$ with $\lambda$ unknown. Sufficient statistics for $\lambda$ is $\sum_{i=1}^{n} X_{i}$. Having observed data $\mathbf{x^{o}},$ we have that
$$g(\sum_{i=1}^{n} x^{0}_{i}|\lambda) \propto \exp(-n\lambda)\lambda^{\sum_{i=1}^{n} x^{0}_{i}} \mbox{  and  }  g'(\sum_{i=1}^{n} x^{0}_{i}|\lambda) \propto  \exp(-n\lambda)\lambda^{\sum_{i=1}^{n} x^{0}_{i}} log(\lambda)$$
Thus, 
\begin{equation} \label{Ex-2}
\pi'(\lambda,t) \propto \pi(\lambda) \exp \left [ log(\lambda)\, t  \right ],
\end{equation}
where, $t = \sum_{i=1}^{n} x'_{i} - \sum_{i=1}^{n} x^{0}_{i} .$  The ABC class of priors in this case would be 
\begin{equation*}
\Gamma^{G}_{\epsilon} =\left \{\pi'(\cdot): \pi'(\lambda,t) =\frac{ \pi(\lambda)\lambda^{t}}{E_{\pi(\lambda)}\left[ \lambda^{t} \right ] }, t \in [-\epsilon, +\epsilon] \right \}.
\end{equation*}


\section{Exponential class family of distributions}\label{abc-E}

Consider the special case where $f(\mathbf{x}|\mathbf{\theta})$ belongs to the exponential family and $\pi_{\gamma}(\mathbf{\theta})$ is chosen to be the conjugate prior distribution, where, $\gamma$ denotes the hyper-parameters of the prior distribution. In this case, we can go further and show that, in fact, $\pi'(\cdot) = \pi_{\gamma'}(\cdot).$ That is, $\pi'$ is obtained simply by changing the hyper-parameters $\gamma$. In other words, for every $\mathbf{x'}$ there must exist a $\gamma'$ such that
\begin{equation*}
\pi_{\gamma}(\mathbf{\theta}) f(\mathbf{x'}|\mathbf{\theta} )= \pi_{\gamma'}(\mathbf{\theta}) f(\mathbf{x^{o}}|\mathbf{\theta}).
\end{equation*}

\begin{thm}\label{thm2}
If $f(\mathbf{x}|\mathbf{\theta})$ belongs to the exponential family and $\pi_{\gamma}(\mathbf{\theta})$ is chosen to be the conjugate prior distribution, then for any new data $\mathbf{x'},$ there exists hyper-parameters $\gamma'$ such that 
\begin{equation}\label{premise}
\pi_{\gamma}(\mathbf{\theta}) f(\mathbf{x'}|\mathbf{\theta} )= \pi_{\gamma'}(\mathbf{\theta}) f(\mathbf{x^{o}}|\mathbf{\theta}).
\end{equation}
\end{thm}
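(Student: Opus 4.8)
The plan is to write both the likelihood and the conjugate prior in their canonical exponential-family forms and then match the $\theta$-dependent factors on the two sides of \eqref{premise}, solving for $\gamma'$ explicitly. Concretely, I would write the natural exponential family likelihood for a sample of size $n$ as
$$f(\mathbf{x}|\mathbf{\theta}) = h(\mathbf{x})\exp\left[\eta(\mathbf{\theta})^{\top} T(\mathbf{x}) - n\,A(\mathbf{\theta})\right],$$
where $T(\mathbf{x})$ is the sufficient statistic, $\eta(\mathbf{\theta})$ the natural parameter and $A(\mathbf{\theta})$ the log-partition function, and the conjugate prior with hyper-parameters $\gamma = (\chi,\nu)$ as
$$\pi_{\gamma}(\mathbf{\theta}) = \frac{1}{Z(\gamma)}\exp\left[\eta(\mathbf{\theta})^{\top}\chi - \nu\,A(\mathbf{\theta})\right].$$

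Next I would substitute these into the left-hand side $\pi_{\gamma}(\mathbf{\theta})f(\mathbf{x'}|\mathbf{\theta})$ and collect the terms depending on $\mathbf{\theta}$. Because both factors are exponentials that are linear in $\eta(\mathbf{\theta})$ and in $A(\mathbf{\theta})$, the product is again of conjugate form, with $\chi$ replaced by $\chi + T(\mathbf{x'})$ and $\nu$ replaced by $\nu + n$. Doing the same on the right-hand side $\pi_{\gamma'}(\mathbf{\theta})f(\mathbf{x^{o}}|\mathbf{\theta})$ gives the shifted hyper-parameters $\chi' + T(\mathbf{x^{o}})$ and $\nu' + n$. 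Equating the coefficients of $\eta(\mathbf{\theta})$ and of $A(\mathbf{\theta})$ — legitimate because $\eta(\mathbf{\theta})$ and $A(\mathbf{\theta})$ are affinely functionally independent for a minimal exponential family — yields the two matching conditions $\chi + T(\mathbf{x'}) = \chi' + T(\mathbf{x^{o}})$ and $\nu = \nu'$. Solving these produces the explicit witness
$$\gamma' = (\chi',\nu') = \bigl(\chi + T(\mathbf{x'}) - T(\mathbf{x^{o}}),\ \nu\bigr),$$
which establishes existence and, pleasingly, shows that the required distortion of the prior is simply a translation of the location hyper-parameter by the discrepancy $T(\mathbf{x'}) - T(\mathbf{x^{o}})$ between the simulated and observed sufficient statistics.

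The step I expect to require the most care is the precise sense of the equality in \eqref{premise}. Matching the $\mathbf{\theta}$-dependent exponential factors makes the two sides agree as functions of $\mathbf{\theta}$ only up to a multiplicative constant, namely the ratio $\bigl(h(\mathbf{x'})/Z(\gamma)\bigr)\big/\bigl(h(\mathbf{x^{o}})/Z(\gamma')\bigr)$ of base-measure and normalizing-constant factors, none of which depend on $\mathbf{\theta}$. Since $\gamma$ carries only the two degrees of freedom $(\chi,\nu)$, both already consumed in matching the exponent, there is in general no further freedom to force this leftover constant to equal one, so the identity should be read as holding up to proportionality in $\mathbf{\theta}$ — which is exactly what is needed, since \eqref{premise} enters only through Bayes' rule where such constants are immaterial. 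I would therefore either restate \eqref{premise} as a proportionality or explicitly note that the omitted constant is absorbed upon normalising the two posteriors. Finally, I would remark that this result is the exact, finite-dimensional counterpart of the approximate class $\dot{\pi}$ of Theorem \ref{thm1}: whereas the Taylor expansion there produced an \emph{approximate} reweighting $\pi'(\mathbf{\theta})\propto\pi(\mathbf{\theta})\exp[h(\mathbf{\theta})\,t]$, conjugacy here makes the reweighting exact and, moreover, keeps it inside the same parametric family.
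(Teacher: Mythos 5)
Your proof is correct and takes essentially the same route as the paper's: both write the likelihood and conjugate prior in exponential-family form and absorb the discrepancy in sufficient statistics into the prior's location hyper-parameter, your $\gamma'=\bigl(\chi+T(\mathbf{x'})-T(\mathbf{x^{o}}),\,\nu\bigr)$ being exactly the paper's $\gamma'=\{k,l+\epsilon\}$ with $\epsilon=S(\mathbf{x'})-S(\mathbf{x^{o}})$. Your closing caveat is also well taken: the paper's own derivation in fact establishes \eqref{premise} only up to proportionality in $\mathbf{\theta}$ (its chain begins and ends with $\propto$), precisely the reading you propose.
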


\begin{proof}
$f(\mathbf{x}|\mathbf{\theta}) = A(\mathbf{\theta})B(\mathbf{x})\, \exp[C(\mathbf{\theta})S(\mathbf{x})].$ The conjugate prior distribution is of the form $\pi_{\gamma}(\mathbf{\theta}) \propto [A(\mathbf{\theta})]^{k}\,\exp [C(\mathbf{\theta})l],$ where, $\gamma = \{k,l\},$  resulting in a posterior distribution of the form $ \pi(\mathbf{\theta}|\mathbf{x}) \propto  [A(\mathbf{\theta})]^{k+1}\,\exp [C(\mathbf{\theta})(l+S(\mathbf{x}))].$\\

\noindent For a different data $\mathbf{x'}$ such that (WLOG) $S(\mathbf{x'}) - S(\mathbf{x}) = \epsilon,$  we have that
\begin{eqnarray}\label{Eq3-2}
\nonumber \pi_{\gamma}(\mathbf{\theta}) f(\mathbf{x'}|\mathbf{\theta} ) &\propto  &[A(\mathbf{\theta})]^{k+1}\,\exp [C(\mathbf{\theta})(l+S(\mathbf{x'}))]\\
\nonumber  & = & [A(\mathbf{\theta})]^{k+1}\,\exp [C(\mathbf{\theta})((l+\epsilon)+S(\mathbf{x}))]\\
\nonumber & = &   [A(\mathbf{\theta})]^{k}\,\exp [C(\mathbf{\theta})(l+\epsilon)]  [A(\mathbf{\theta})]\,\exp [C(\mathbf{\theta})S(\mathbf{x})],\\
& \propto &  \pi_{\gamma'}(\mathbf{\theta}) f(\mathbf{x}|\mathbf{\theta}) ,
\end{eqnarray}
where,  $\gamma' = \{k,l+\epsilon\}.$  Hence proved.
\end{proof}

\eqref{premise} implies that for every $\epsilon>0$ neighborhood around $s(\mathbf{x^{o}}),$ there exists a $\delta(\epsilon)>0$ neighborhood around $\gamma$ defining the ABC-E class of priors as 
\begin{equation} \label{Eq3-3}
\Gamma^{E}_{\epsilon} = \{\pi_{\gamma'}(\mathbf{\theta}):\gamma' \in (\gamma - \delta(\epsilon),\gamma + \delta(\epsilon))\},
\end{equation}
 where $E$ stands for the Exponential family.\\

\subsection{Comparison with $\Gamma_{\epsilon}$} \label{compare}

\noindent For $t \in [-\epsilon, +\epsilon],\, \pi_{\gamma'}(\mathbf{\theta})$ in \eqref{Eq3-2} can also be written as
\begin{eqnarray}\label{Eq3-4}
\nonumber \pi_{\gamma'}(\mathbf{\theta}) &\propto& [A(\mathbf{\theta})]^{k}\,\exp [C(\mathbf{\theta})(l+ t)]\\
\nonumber &\propto&  [A(\mathbf{\theta})]^{k}\,\exp [C(\mathbf{\theta})\times l] \exp[C(\mathbf{\theta})\times t]\\
\nonumber &\propto& \pi_{\gamma}(\mathbf{\theta}) \exp[C(\mathbf{\theta})\times t]\\
\nonumber & = & \frac{\pi_{\gamma}(\mathbf{\theta}) \exp[C(\mathbf{\theta})\times t]}{\int \pi_{\gamma}(\mathbf{\theta}) \exp[C(\mathbf{\theta})\times t]\, d\theta}\\
 & = & \frac{\pi_{\gamma}(\mathbf{\theta}) \exp[C(\mathbf{\theta})\times t]}{E_{\pi_{\gamma}}\left[ \exp[C(\mathbf{\theta})\times t] \right ]}
\end{eqnarray}

\eqref{Eq2-6} and \eqref{Eq3-4} indicate that the members of both the ABC class of priors $\Gamma_{\epsilon}$ and the ABC-E class of priors $\Gamma_{\epsilon}^{E}$ have the same form but the weights are defined differently. For $\Gamma_{\epsilon},$ they are defined using the function $h(\theta)$, whereas, for  $\Gamma_{\epsilon}^{E}$ they are defined using the function $C(\theta).$ \\

The special case where $\Gamma_{\epsilon}$ is developed for the Exponential family likelihoods, we have that
\begin{equation*}
g(s(\mathbf{x^{0}})|\mathbf{\theta}) = A(\mathbf{\theta})B_{0}(S(\mathbf{x}))\, \exp[C(\mathbf{\theta})S(\mathbf{x})],
\end{equation*}
where we have decomposed the component $B(\mathbf{x}) = \tilde{B}(\mathbf{x})\times B_{0}(S(\mathbf{x}))$ and discarded the component $\tilde{ B}(\mathbf{x})$ since it is not a function of $\mathbf{\theta}$.  Then we have that,
\begin{eqnarray}
\nonumber g'(s(\mathbf{x^{0}})|\mathbf{\theta}) & = & \frac{dg(s(\mathbf{x^{0}})|\mathbf{\theta})}{dS(\mathbf{x})}\\
\nonumber &= & A(\mathbf{\theta})\left[B'_{0}(S(\mathbf{x^{0}}))\, \exp[C(\mathbf{\theta})S(\mathbf{x^{0}})] + B_{0}(S(\mathbf{x^{0}}))C(\mathbf{\theta})\, \exp[C(\mathbf{\theta})S(\mathbf{x^{0}})]\right],
\end{eqnarray}
where, $B'_{0}(S(\mathbf{x})) = \frac{dB_{0}(S(\mathbf{x}))}{dS(\mathbf{x})}$ and $\frac{dS(\mathbf{x})}{dS(\mathbf{x})} = 1.$
 Therefore,
\begin{equation} \label{Eq3-5}
h(\mathbf{\theta}) = \frac{g'(s(\mathbf{x^{0}})|\mathbf{\theta})}{g(s(\mathbf{x^{0}})|\mathbf{\theta})} = \frac{B'_{0}(S(\mathbf{x^{0}})) }{B_{0}(S(\mathbf{x^{0}})) } + C(\mathbf{\theta}).
\end{equation}

\eqref{Eq3-4} and \eqref{Eq3-5} show that for likelihoods that belong to the Exponential family, the members of both $\Gamma_{\epsilon}$ and $\Gamma_{\epsilon}^{E}$ will have the same parametric form, but members of $\Gamma_{\epsilon}$ will contain an extra term $\exp\left[\frac{B'_{0}(S(\mathbf{x^{0}}))}{B_{0}(S(\mathbf{x^{0}})) }\right].$

\subsection{Examples}

\textbf{Example 1 (contd.):} For $\mathbf{X} \sim N(\mu,\sigma^{2}),$ with $\mu$ unknown and  $\sigma$ known, conjugate prior for $\mu$ is $N(m,s^{2})$. The posterior distribution is $N(m',s'^{2}),$ where $m' = w_{1}m+w_{2} \mathbf{\bar{x}^{0}},$ where,  $$\frac{1}{s'^{2}} = \frac{1}{s^{2}}+\frac{n}{\sigma^{2}} \mbox{ and } w_{1} = \frac{s'^{2}}{s^{2}},\; w_{2} = \frac{n s'^{2}}{\sigma^{2}}.$$\\

Accepting $\mathbf{x'}$ such that $\mathbf{\bar{x}'} = \mathbf{\bar{x}^{0}} + \epsilon$ implies that the posterior mean now becomes $m'(\epsilon) = m' + w_{2}\epsilon,$ which can also be obtained as $m'(\epsilon) = w_{1} (m+\frac{w_{2}}{w_{1}}\epsilon) + w_{2}\mathbf{\bar{x}^{0}}.$ That is, the same posterior mean $m'(\epsilon)$ can be obtained either by accepting a distorted data $\mathbf{x'}$ or by using the distorted prior with mean $m+\frac{w_{2}}{w_{1}}\epsilon.$ The ABC class of priors in this case will be 
\begin{equation*}
\Gamma^{E}_{\epsilon}=\{ N(m'',s^{2}): m'' \in (m -\frac{w_{2}}{w_{1}}\epsilon, m+\frac{w_{2}}{w_{1}}\epsilon)\}.
\end{equation*}
Note that, when $\sigma^{2}$ is known, the most natural choice is $s^{2} = \sigma^{2}/n,$ and in this case, $\frac{w_{2}}{w_{1}}=1,$ resulting in 
\begin{equation*}
\Gamma^{E}_{\epsilon}=\{ N(m'',s^{2}): m'' \in (m -\epsilon, m+\epsilon)\}.
\end{equation*}

\noindent Now, to derive $\Gamma_{\epsilon}$ from $\Gamma^{E}_{\epsilon},$ note that, in this case, the joint likelihood can be factorised as
\begin{eqnarray}
\nonumber  f(\mathbf{x}|\mathbf{\theta} ) & = & (2\pi\sigma^{2})^{-n/2} \times \exp\left[\frac{-1}{2\sigma^{2}}\sum_{i=1}^{n} (x_{i}-\bar{x})^{2}\right]\times \exp\left[\frac{-n}{2\sigma^{2}}(\mu-\bar{x})^{2}\right]\\
\nonumber & = &  \tilde{B}(\mathbf{x})\times g(\mathbf{\bar{x}}|\mu),
\end{eqnarray}
where $$\tilde{B}(\mathbf{x}) = (2\pi\sigma^{2})^{-n/2} \times \exp\left[\frac{-1}{2\sigma^{2}}\sum_{i=1}^{n} (x_{i}-\bar{x})^{2})\right]$$
and $g(\bar{x}|\mu)$ is as defined in \eqref{Ex-1-1}. Therefore, we have that 

$$A(\mu) = \exp\left[\frac{-n\mu^{2}}{2\sigma^{2}}\right],\, B^{0}(\bar{x}) =  \exp\left[\frac{-n\bar{x}^{2}}{2\sigma^{2}}\right],\,C(\mu) = \frac{n\mu}{\sigma^{2}}\, \mbox{ and}\, S(\mathbf{x}) = \bar{x}.$$ Then, $$ \frac{B'_{0}(\bar{x})) }{B_{0}(\bar{x}) } =   \frac{-n\bar{x}}{\sigma^{2}}.$$  Thus, for $\mathbf{x^{0}},$ using \eqref{Eq3-5} we get $$h(\mu) =  \exp \left [ \frac{n}{\sigma^{2}} (\mu - \mathbf{\bar{x}^{0}})\right ],$$ same as that obtained in Section \ref{gen-ex} using the Taylor series approximations.\\ 

\textbf{Example 2 (contd.):}For $\mathbf{X} \sim Poisson(\lambda),$ with $\lambda$ unknown, conjugate prior for $\lambda$ is $Gamma(r,v),$ where $r$ is the shape and $v$ is the rate parameter. The posterior distribution is $Gamma(r',v'),$ where $r' = r + \sum_{i=1}^{n} x^{0}_{i}$ and $v' = v+n.$ Accepting $\mathbf{x'}$ such that $\sum_{i=1}^{n} x'_{i} = \sum_{i=1}^{n} x^{0}_{i} + \epsilon$ implies that the posterior now becomes $Gamma(r'(\epsilon),v'),$ where $r'(\epsilon) = r' + \epsilon = (r+\epsilon) + \sum_{i=1}^{n} x^{0}_{i},$ which can also be obtained by using $Gamma(r+\epsilon,v)$ as the prior distribution.  That is, the same posterior can be obtained either by accepting a distorted data $\mathbf{x'}$ or by using the distorted prior with parameter $r+\epsilon.$ The ABC class of priors in this case will be 
\begin{equation*}
\Gamma^{E}_{\epsilon}=\{ Gamma(r'',v): r'' \in (r-\epsilon, r+ \epsilon)\}.
\end{equation*}

These two examples also illustrate that $\Gamma_{\epsilon}$ and $\Gamma_{\epsilon}^{E}$ are distinct classes even for models where the likelihood is from the Exponential family of distributions and the prior distribution is the corresponding conjugate prior distribution. However, the difference between their upper and lower  limits may not necessarily be significant.  For example, we consider a simulated data for the Normal distribution example. Let us assume that for $n=100, \, X_{1},\ldots,X_{n} \sim N(\mu, 2).$ Let $\pi(\mu)$ be $N(10,2/\sqrt{n})$ and the observed sufficient statistic $ \bar{x}^{0} =9.975.$ The two classes obtained in this case are shown in Figure \ref{normal_fig}. It shows that in this case, $\Gamma_{\epsilon}$ and $\Gamma_{\epsilon}^{E}$ appear to be almost exactly the same. This is because in this case, $$ \frac{B'_{0}(\bar{x})) }{B_{0}(\bar{x}) } =   \frac{-n\bar{x}}{\sigma^{2}} = - 249.375$$
and $\exp[-249.375] \approx 0.$ Of course, this may not necessarily be the case if, say, $n\bar{x}/\sigma^{2}$ was much closer to $0.$

\begin{figure} [ht] 
\includegraphics [scale=0.2]{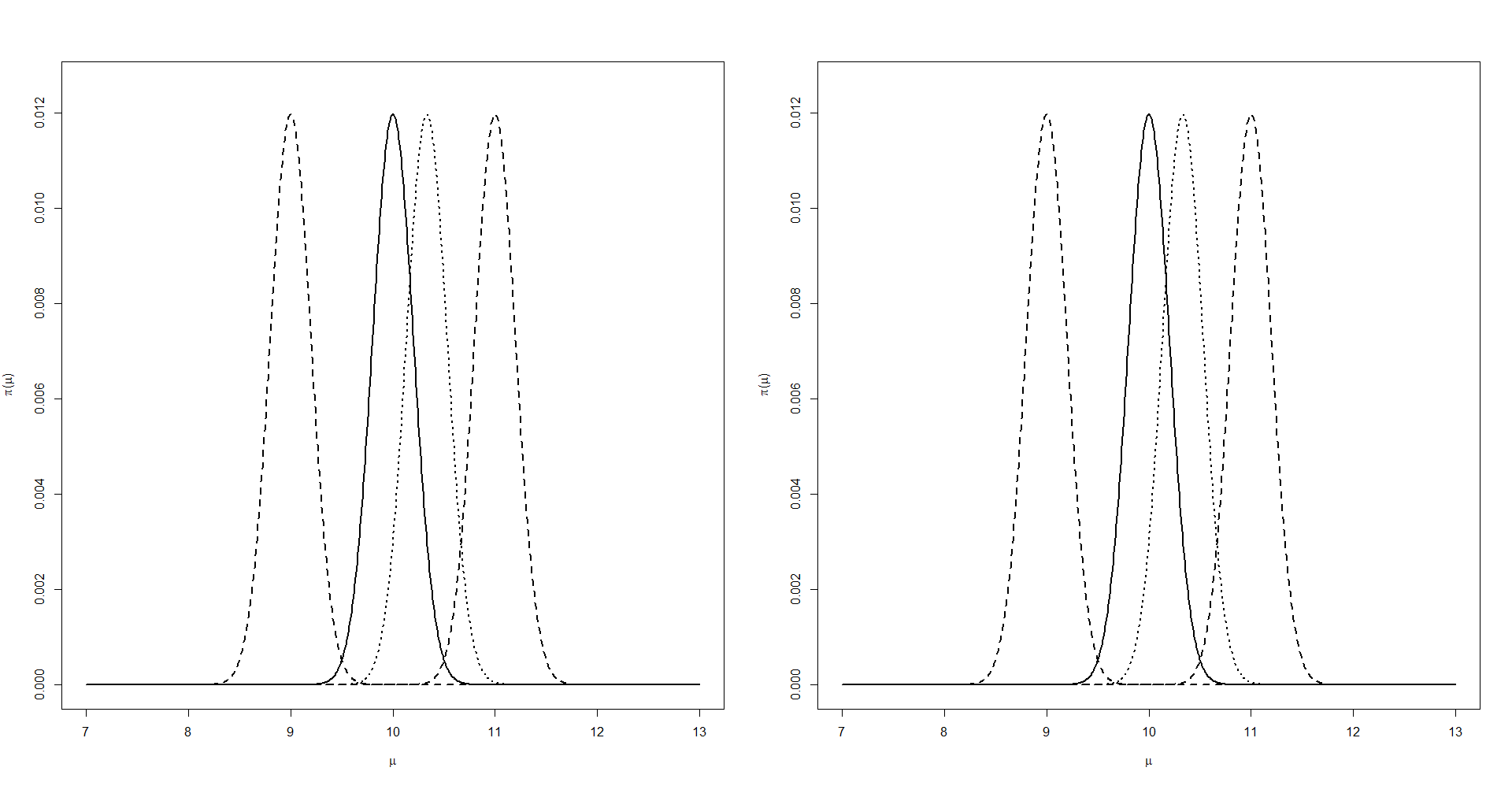}
\caption{The original prior (solid), the upper and lower (dashed) bounds and an internal member (dotted) for the Normal distribution example using the [\emph{left}] ABC class of priors $\Gamma_{\epsilon}$ and [\emph{right}] the ABC-E class of priors $\Gamma^{E}_{\epsilon}$, for $\epsilon =1.$}
\label{normal_fig}
\end{figure}

\subsection{Properties of $\Gamma^{E}_{\epsilon}$}

As discussed in Section \ref{compare} the members of both $\Gamma_{\epsilon}$ and $\Gamma_{\epsilon}^{E}$ share a similar structure. Due to this similarity, it can be seen that the properties of  the $\Gamma_{\epsilon}$ class of priors discussed in Section \ref{props} are also applicable to  the  $\Gamma_{\epsilon}^{E}$ class of priors. Specifically, for the stochastic ordering this is subject to $C(\mathbf{\theta})$ satisfying the conditions that were placed on $h(\mathbf{\theta})$ and once again, $\pi$ being a $MTP2$ distribution for multivariate $\mathbf{\theta}.$


\section{Computational Aspects} \label{algo}

Here, we want to provide Monte Carlo algorithms that enable sampling from the prior distributions that belong to the classes of priors defined in this paper. Further, we will show that the posterior distribution obtained using the ABC method can also be obtained by sampling from the posterior distributions $ \pi(\mathbf{\theta}|\mathbf{x'})$ corresponding to the prior distributions $\pi'(\theta)$ that belong to the ABC classes of priors. 

\subsection{Sampling from the Prior Distributions}

First we address the question of how does one sample from a distribution that belongs to the ABC classes of prior distributions. The answer to this question is only likely to be non-trivial for the ABC class of prior distributions $\Gamma_{\epsilon}.$ This is because the members of the ABC-E class of prior distributions $\Gamma_{\epsilon}^{E}$ have the same form as the original conjugate prior distribution $\pi$ and should therefore be easy to be sampled from. Observe regardless that, given $t \in [-\epsilon, +\epsilon],$ 
\begin{equation} \label{Imp}
\frac{\pi'(\theta)}{\pi(\theta)} = \frac{w(\theta)^{t}}{E_{\pi}[w(\theta)^{t}]},
\end{equation}
is true for both $\Gamma_{\epsilon}$ as well as $\Gamma_{\epsilon}^{E}$.  \eqref{Imp} can be used to sample from $\pi'$ using the importance sampling approach.\\

\noindent \underline{\textbf{A.I. Importance sampling algorithm to sample from $\pi'(\theta,t) \in \Gamma_{\epsilon}$ or $\Gamma^{E}_{\epsilon}.$}}\\

\noindent Given $S(\cdot), \mathbf{x}^{0}$ and $0<t<\epsilon,$
\begin{description}
\item{Step I: For large $N,$ sample $\theta_{1},\ldots,\theta_{N} \sim \pi(\theta)$.}
\item{Step II: For each $\theta_{i},\, i=1,\ldots, N,$ compute the importance weight
$$\omega_{i} = \frac{w(\theta_{i})^{t}}{E_{\pi}[w(\theta_{i})^{t}]}.$$}
\item{Step III: $\{(\theta_{1},\omega_{1}),\ldots, (\theta_{N},\omega_{N})\}$ is a weighted sample from $\pi'(\theta,t).$}
\end{description}

%
%
%

\subsection{Sampling from the Posterior Distributions $ \pi(\mathbf{\theta}|\mathbf{x'})$}

Next, we consider sampling from the posterior distributions $ \pi(\mathbf{\theta}|\mathbf{x'})$ that correspond to the prior distributions $\pi'(\theta).$  From Theorems \ref{thm1} and \ref{thm2} we have that
\begin{eqnarray} \label{imp-1}
\nonumber \pi(\mathbf{\theta}|\mathbf{x'}) &\propto& \pi'(\mathbf{\theta}) g(s(\mathbf{x^{0}})|\mathbf{\theta}),\\
\nonumber &\propto& \pi(\mathbf{\theta}) \exp \left[ h(\theta) \times t \right] g(s(\mathbf{x^{0}})|\mathbf{\theta}),\\
& = & \pi(\mathbf{\theta}) \frac{w_{\mathbf{t}}(\theta)g(s(\mathbf{x^{0}})|\mathbf{\theta})}{E_{\pi}\left[ w_{\mathbf{t}}(\theta) g(s(\mathbf{x^{0}})|\mathbf{\theta})\right].}
\end{eqnarray}

Thus, for each $t \in (-\epsilon,+\epsilon),$ we can use importance sampling to sample from the posterior  $\pi(\mathbf{\theta}|\mathbf{x'})$.\\

\noindent \underline{\textbf{A.II. Importance sampling algorithm to sample from $ \pi(\mathbf{\theta}|\mathbf{x'})$}}\\

\noindent Given $S(\cdot), \mathbf{x}^{0}$ and $0<t<\epsilon,$
\begin{description}
\item{Step I: For large $N,$ sample $\theta_{1},\ldots,\theta_{N} \sim \pi(\theta).$}
\item{Step II: For each $\theta_{i}, \, i=1,\ldots, N,$ compute the importance weight
$$\omega_{i} =  \frac{w_{\mathbf{t}}(\theta)g(s(\mathbf{x^{0}})|\mathbf{\theta})}{E_{\pi}\left[ w_{\mathbf{t}}(\theta) g(s(\mathbf{x^{0}})|\mathbf{\theta})\right].}$$}
\item{Step III: $\{(\theta_{1},\omega_{1}),\ldots, (\theta_{N},\omega_{N})\}$ is a weighted sample from $\pi(\mathbf{\theta}|\mathbf{x'}).$}
\end{description}

Note that, here we are able to sample from $ \pi(\mathbf{\theta}|\mathbf{x'})$ without having to simulate $\mathbf{x'}$ and by simply using the modified prior $\pi'$ instead. Again, from Theorem \ref{thm2} when the likelihood is from the Exponential family and a conjugate prior is used, sampling from $\pi(\mathbf{\theta}|\mathbf{x'})$ is trivial since it will be a closed form distribution that can be directly sampled from.\\

\subsection{Sampling from the Posterior Distribution $ \pi(\mathbf{\theta}|\mathbf{x^{0}})$}

The rationale behind the ABC methods is to accept a value $\mathbf{\theta} \sim \pi(\mathbf{\theta})$ if $\mathbf{x'}$ simulated using it is considered to be \emph{close enough} so that $\mathbf{x'} \approx \mathbf{x^{0}}$. However, $\mathbf{x'} \approx \mathbf{x^{0}} \Rightarrow \pi(\mathbf{\theta}|\mathbf{x'}) \approx \pi(\mathbf{\theta}|\mathbf{x^{0}})$ (assuming that the likelihood is smooth). That is, sampling from $ \pi(\mathbf{\theta}|\mathbf{x^{0}})$ can be approximated by sampling from  $\pi(\mathbf{\theta}|\mathbf{x'})$ instead. Thus, if we sample $\mathbf{\theta}$ from $ \pi(\mathbf{\theta}|\mathbf{x'})$ for every value $\mathbf{x'}$ that was close enough, then the resulting sample could be considered to be from $ \pi(\mathbf{\theta}|\mathbf{x^{0}}).$ The duality established in Sections \ref{abc-G} and \ref{abc-E} implies that $ \pi(\mathbf{\theta}|\mathbf{x'})$ can be accessed by using the corresponding $\pi'$ instead. Therefore, we should be able to sample from $ \pi(\mathbf{\theta}|\mathbf{x^{0}})$ by sampling from $ \pi(\mathbf{\theta}|\mathbf{x'})$ obtained using $\pi'$ for each $\mathbf{x'}$ that was close enough.

\noindent \underline{\textbf{A.III. Importance sampling algorithm to sample from $ \pi(\mathbf{\theta}|\mathbf{x^{0}})$}}\\

\noindent Given $S(\cdot), \mathbf{x}^{0},\,\epsilon > 0,$ and $N$ and $m$ sufficiently large,
\begin{description}
\item{Step I: Sample $t_{1},\ldots.t_{N} \sim U(-\epsilon,+\epsilon).$}
\item{Step II: For each $t_{i},$ obtain $\{(\theta_{i1},\omega_{i1}),\ldots, (\theta_{im},\omega_{im})\}$ from the corresponding $\pi(\mathbf{\theta}|\mathbf{x'})$ using Algorithm A.II.}
\item{Step III:Normalise the $N \times m$ weights $\omega_{i1},\ldots,\omega_{im}$ for $i=1,\ldots,N.$}
\item{Step IV: $\{(\theta_{i1},\omega_{i1}),\ldots, (\theta_{im},\omega_{im})\},$ for $i=1,\ldots,N,$  is a weighted sample from $\pi(\mathbf{\theta}|\mathbf{x^{0}}).$}
\end{description}

Again, note that,  for any $ \pi(\mathbf{\theta}|\mathbf{x'})$ that corresponds to a prior distribution in $\Gamma_{\epsilon}^{E},$ we could sample directly from it in Step II without having to use Algorithm A.II.

We continue the Normal example in Section 3.2 and compare the true posterior with the posterior obtained using a standard ABC algorithm, the posterior obtained using Algorithm A.III and the posterior obtained by directly sampling from the posteriors for each of the priors in $\Gamma_{\epsilon}^{E}.$  These posteriors are plotted in Figure \ref{abc-post}.

\begin{figure} [ht]
\includegraphics [scale=0.45]{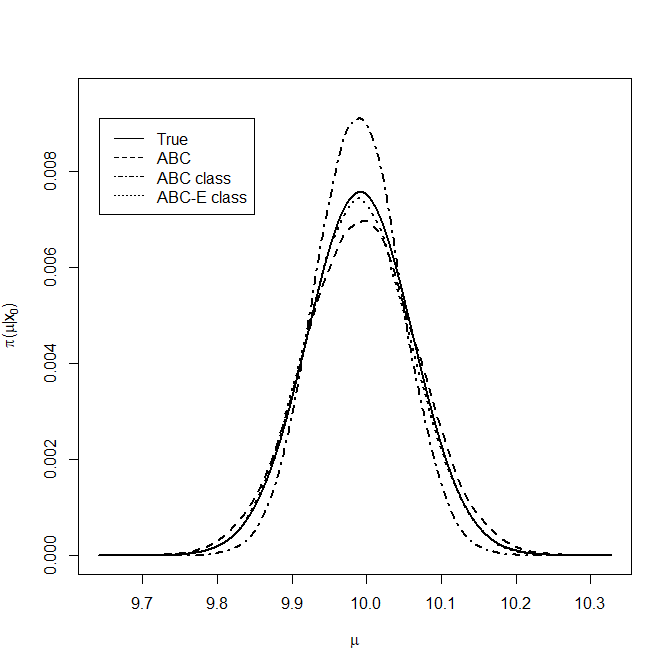}
\caption{The true posterior (solid), the posterior obtained using a standard ABC algorithm (dashed), the posterior obtained by sampling from the posteriors obtained using the ABC class of priors - Algorithm A.V - (dot-dashed) and  the posterior obtained by direct sampling from the posteriors obtained using the ABC-E class of priors (dotted) for the Normal distribution example.}
 \label{abc-post}
\end{figure}

\section{Generalizations using Likelihood} \label{general}

The ABC classes of priors defined so far are applicable for problems where the sufficient statistic is available. However, for many complex problems, the sufficient statistic is often not available or not known. For some of these problems, the likelihood function may be known, but for others the exact likelihood function may not be available either. It is possible to generalize the ideas considered in Sections \ref{abc-G} and \ref{abc-E} in two different ways. Firstly, we can define the class of priors where the sufficient statistic is not available but we do have the likelihood $f(\cdot|\mathbf{\theta}).$ Secondly, we can also define a class of priors where the likelihood is not available in a closed form and therefore an approximate likelihood function $\tilde{f}(\cdot|\mathbf{\theta})$ is to be used instead.

\subsection{Likelihood is available} \label{general-1}

Consider the case where the sufficient statistics $S(\cdot)$ is not available but the likelihood $f(\cdot|\mathbf{\theta})$ is. Suppose we observe data $\mathbf{x^{0}},$ then the posterior distribution will be given by
\begin{equation}
\nonumber \pi(\mathbf{\theta}|\mathbf{x^{0}}) = \frac{\pi(\mathbf{\theta}) f(\mathbf{x^{0}}|\mathbf{\theta})}{m_{0}(\mathbf{x^{0}})},
\end{equation}
where $m_{0}(\mathbf{x^{0}}) = \int \pi(\mathbf{\theta}) f(\mathbf{x^{0}}|\mathbf{\theta}) \,d\theta.$ While $\mathbf{x^{0}}$ has been observed, we believe that we may as well expect to observe a slightly different data $\mathbf{x'}.$ Now, if we were to observe a data $\mathbf{x'}$ instead, we should still be able to arrive at the same posterior distribution by changing the prior distribution accordingly. Therefore we have
\begin{equation}
\nonumber \pi(\mathbf{\theta}|\mathbf{x^{0}}) = \frac{\pi'(\mathbf{\theta}) f(\mathbf{x'}|\mathbf{\theta})}{m_{1}(\mathbf{x'})},
\end{equation}
where $m_{1}(\mathbf{x'}) = \int \pi'(\mathbf{\theta}) f(\mathbf{x'}|\mathbf{\theta}) \,d\theta.$ This gives us that
\begin{equation} \label{Eq6-1}
\pi'(\mathbf{\theta}) = \frac{f(\mathbf{x^{0}}|\mathbf{\theta})}{f(\mathbf{x'}|\mathbf{\theta})}\frac{m_{1}(\mathbf{x'})}{m_{0}(\mathbf{x^{0}})}\, \pi(\mathbf{\theta}).
\end{equation}
\eqref{Eq6-1} provides the general relationship between distortion in the data and distortion in the prior distributions. This relationship can be explored to generate a new general class of prior distributions that can be used when there is a case to be made for uncertainty in the observed data and where a slightly different data was considered to be admissible under the same posterior distribution.\\

In the case where the likelihood comes from the Exponential family, but the prior distribution may not be a conjugate prior, we have that
$$ \frac{f(\mathbf{x^{0}}|\mathbf{\theta})}{f(\mathbf{x'}|\mathbf{\theta})} = \frac{B(\mathbf{x^{0}})}{B(\mathbf{x'})}\, \exp\left[ C(\mathbf{\theta}) (s(\mathbf{x^{0}}) - s(\mathbf{x'}))\right]\,\mbox{and}\,  \frac{m_{1}(\mathbf{x'})}{m_{0}(\mathbf{x^{0}})} = \frac{B(\mathbf{x'}) \times I'}{B(\mathbf{x^{0}}) \times I^{0}},$$
where $I^{0}$ and $I'$ are normalizing constants after factoring $B$ out. This implies that we have
\begin{equation} \label{Eq6-2}
\pi'(\mathbf{\theta}) \propto \pi(\mathbf{\theta})  \exp\left[ C(\mathbf{\theta}) (s(\mathbf{x^{0}}) - s(\mathbf{x'}))\right],
\end{equation}
the same as the one obtained for the class $\Gamma_{\epsilon}^{E}.$ \eqref{Eq6-2} provides a powerful result because it shows that the $\Gamma_{\epsilon}^{E}$ class is applicable when the likelihood belongs to the Exponential family of distributions, irrespective of whether conjugate priors were used or not.\\

When the likelihood does not come from the Exponential family of distributions, we can once again use the Taylor series approximation to derive a class of priors. We have that
$$\frac{f(\mathbf{x^{0}}|\mathbf{\theta})}{f(\mathbf{x'}|\mathbf{\theta})} = \exp\left[ \log(f(\mathbf{x^{0}}|\mathbf{\theta})) - \log(f(\mathbf{x'}|\mathbf{\theta})) \right].$$
Approximating $ \log(f(\mathbf{x'}|\mathbf{\theta}))$ using the first order Taylor's approximation around $ \log(f(\mathbf{x^{0}}|\mathbf{\theta}))$ we get that
\begin{equation} \label{Eq6-3}
\frac{f(\mathbf{x^{0}}|\mathbf{\theta})}{f(\mathbf{x'}|\mathbf{\theta})} = \exp\left[ -\frac{\sum_{i=1}^{N} f_{i}(x^{0}_{i}|\mathbf{\theta}) (x'_{i} - x^{0}_{i})}{f(\mathbf{x^{0}}|\mathbf{\theta})}  + e \right],
\end{equation}
where, $f_{i} = \frac{d\,\log f(x_{i}|\mathbf{\theta})}{dx_{i}}$ for data of size $N.$ Then using \eqref{Eq6-1} we get that
\begin{equation} \label{Eq6-4}
\pi'(\mathbf{\theta}) \propto   \exp\left[  -\frac{\sum_{i=1}^{N} f_{i}(x^{0}_{i}|\mathbf{\theta}) (x'_{i} - x^{0}_{i})}{f(\mathbf{x^{0}}|\mathbf{\theta})} + e \right] \, \pi(\mathbf{\theta}).
\end{equation}
Having observed $\mathbf{x^{0}},\,\frac{ f_{i}(\mathbf{x^{0}}|\mathbf{\theta})}{f(\mathbf{x^{0}}|\mathbf{\theta})}$ can be denoted using a function of $\mathbf{\theta}$ alone, say $k_{i}(\mathbf{\theta}).$  Ignoring the first order residual term, we get the new general class of approximate prior distributions
\begin{equation} \label{Eq6-5}
\Gamma_{\epsilon}^{G} = \left\{\pi'(\cdot): \pi'(\mathbf{\theta},t)=\frac{\pi(\mathbf{\theta}) \exp \left[-\sum_{i=1}^{N} k_{i}(\theta) \times t_{i} \right]}{E_{\pi}\left[ \exp \left[-\sum_{i=1}^{N}k_{i}(\theta) \times t_{i} \right] \right]}, t_{i} \in [-\epsilon, +\epsilon]\right
 \},
\end{equation}
where the superscript $G$ stands for general. Notice that the negative sign for $\sum k_{i}(\theta)$ is only because here we have wanted to arrive at the posterior distribution $\pi(\mathbf{\theta}|\mathbf{x^{0}})$ using data $\mathbf{x'}$ unlike for the class $\Gamma_{\epsilon}$ where we wanted to arrive at the posterior distribution $\pi(\mathbf{\theta}|\mathbf{x'})$  using data $\mathbf{x^{0}}.$  \eqref{Eq6-5} is very powerful because it provides a class of priors that is applicable to any likelihood and any prior distribution even when the sufficient statistics is not available, as is the case in many complex real life problems that Bayesian inference attempts to solve. 

\subsection{Likelihood is not available}

The motivation for the development of ABC approaches comes from examples (\cite{tavare97}, \cite{pritchard99}) where the likelihood was intractable. ABC methods get around the need to compute the likelihood altogether by relying on simulating data instead.  Indeed, the problem of intractable likelihoods predates the development of ABC methods and several other approaches have been proposed over time that involve approximating the likelihood using a tractable function. These include, for example, pseudo-likelihoods (\cite{besag75}), variational Bayes (\cite{jordan99}), expectation-propagation (\cite{minka01}), the Integrated Nested Laplace Approximation (INLA) (\cite{rue09}) and synthetic likelihood (\cite{wood10}).\\

It is possible to look at such approximations instead as an implicit exercise in prior robustness. Again, as in Section \ref{general-1}, suppose that we observe data $\mathbf{x^{0}},$ then the posterior distribution will be given by
\begin{equation} \label{Eq6-10}
 \pi(\mathbf{\theta}|\mathbf{x^{0}}) = \frac{\pi(\mathbf{\theta}) f(\mathbf{x^{0}}|\mathbf{\theta})}{m(\mathbf{x^{0}})},
\end{equation}
where  $m(\mathbf{x^{0}}) = \int \pi(\mathbf{\theta}) f(\mathbf{x^{0}}|\mathbf{\theta}) \,d\theta.$   However, assume now, that the likelihood is intractable and instead an approximation $\tilde{f}(\cdot|\mathbf{\theta})$ is to be used instead. This will then result in an approximate posterior distribution given by 
\begin{equation}
\nonumber \tilde{\pi}(\mathbf{\theta}|\mathbf{x^{0}}) = \frac{\pi(\mathbf{\theta}) \tilde{f}(\mathbf{x^{0}}|\mathbf{\theta})}{\tilde{m}(\mathbf{x^{0}})},
\end{equation}
where  $\tilde{m}(\mathbf{x^{0}}) = \int \pi(\mathbf{\theta}) \tilde{f}(\mathbf{x^{0}}|\mathbf{\theta}) \,d\theta.$  It may be possible though to obtain the true posterior distribution when using $\tilde{f}(\cdot|\mathbf{\theta})$ by using a different prior $\pi'(\cdot)$ instead. We then have,
\begin{equation} \label{Eq6-12}
 \pi(\mathbf{\theta}|\mathbf{x^{0}}) = \frac{\pi'(\mathbf{\theta}) \tilde{f}(\mathbf{x^{0}}|\mathbf{\theta})}{m'(\mathbf{x^{0}})},
\end{equation}
where  $m'(\mathbf{x^{0}}) = \int \pi'(\mathbf{\theta}) \tilde{f}(\mathbf{x^{0}}|\mathbf{\theta}) \,d\theta.$  Ignoring the normalizing constants, \eqref{Eq6-10} and \eqref{Eq6-12} give us that
\begin{equation} \label{Eq6-13}
\pi'(\mathbf{\theta}) \propto \frac{f(\mathbf{x^{0}}|\mathbf{\theta})}{\tilde{f}(\mathbf{x^{0}}|\mathbf{\theta})}\, \pi(\mathbf{\theta}), 
\end{equation}
and of course, if in addition the data is simulated as in ABC then, 
\begin{equation} \label{Eq6-14}
\pi'(\mathbf{\theta}) \propto \frac{f(\mathbf{x^{0}}|\mathbf{\theta})}{\tilde{f}(\mathbf{x'}|\mathbf{\theta})}\, \pi(\mathbf{\theta}). 
\end{equation}
For observed $\mathbf{x^{0}}, \, \frac{f(\mathbf{x^{0}}|\mathbf{\theta})}{\tilde{f}(\mathbf{x'}|\mathbf{\theta})}$ is a function of $\mathbf{\theta}$ alone and can be denoted by, say, $h(\mathbf{\theta}).$   We can thus define a new general class of priors - we call it the Approximate Bayesian (AB) class of priors - that is applicable when the likelihood is approximated in a Bayesian inference problem. This is given by

\begin{equation} \label{Eq6-15}
\Gamma^{A} = \left\{\pi'(\cdot): \pi'(\mathbf{\theta}) =  \frac{h(\mathbf{\theta})  \pi(\mathbf{\theta})}{E_{\pi}(\mathbf{\theta})} \right \}, 
\end{equation}
where, note that the parameter $\epsilon$ can not  be incorporated until we are able to define a distance function to compare $f$ and $\tilde{f}.$  If $f$ is intractable, it may not be possible to derive the exact expression for $\pi'.$ But it is clear that a Bayesian inference problem where the likelihood is approximated can also be viewed as an implicit exercise in prior robustness instead of an exercise in  approximate inference.


\section{Discussion} \label{discuss}

The primary contribution of this paper is theoretical. It establishes the duality between approximate Bayesian methods and prior robustness. We show that implementing ABC methods is equivalent to an exercise in prior robustness using the corresponding ABC classes of priors. We also sketch how a similar class of priors can be defined for other approximate Bayesian methods where the likelihood is approximated. The discourse so far has been largely mathematical in nature. We will now discuss interpretation, application, elicitation and computational aspects of these classes of priors.


The duality established here shows that it is possible to absorb any possible distortions/ approximations to the likelihood into the prior distribution without changing the resulting posterior distribution. Where the likelihood was available and a slightly different data was considered admissible, this duality manifests itself through \eqref{Eq6-1}.  On the other hand, where the likelihood has been approximated, the duality is generalized through \eqref{Eq6-13} or \eqref{Eq6-14}.  Specifically for the ABC method, while we initially define the class $\Gamma_{\epsilon}^{E}$ assuming that the sufficient statistics exists and that conjugate prior distributions are used, \eqref{Eq6-2} shows that the class holds even after relaxing both of these assumptions. Further, the class $\Gamma_{\epsilon}^{G}$ obtained by relaxing the sufficient statistics assumption has only minor differences compared to the class $\Gamma_{\epsilon}$ defined assuming that the sufficient statistic exists. These generalizations enable the 'Approximate' aspect of ABC and other approximation methods to be instead viewed as an implicit exercise in prior robustness.

An important counter implication of this work is that it suggests that it will indeed be very difficult to implement meaningful prior robustness analysis on ABC methods. In a typical prior robustness analysis, if one uses an alternative prior $\pi'(\mathbf{\theta})$ then one would expect to get a different posterior $\pi'(\mathbf{\theta}|\mathbf{x^{0}}) \propto f(\mathbf{x^{0}}|\mathbf{\theta}) \pi'(\mathbf{\theta}).$ However, we have shown here that  $\pi'(\mathbf{\theta}|\mathbf{x^{0}}) \propto \pi(\mathbf{\theta}|\mathbf{x'}),$ for some $\mathbf{x'}$ and as long as that $\mathbf{x'}$ is close enough to $\mathbf{x^{0}},$ it has already been accounted for in the ABC posterior $\pi(\mathbf{\theta}|\mathbf{x^{0}}).$ This is precisely what it means to say that the prior robustness analysis is implicit in an ABC method.

The practical application of ABC classes of priors would lie outside of the ABC methods. Where the observed data is rather limited and there are grounds for concern that it represents only a subset of the values considered to be likely, the ABC classes of priors along with the duality shown here could provide a means to model the uncertainty in the data within the Bayesian framework via the prior robustness analysis. Consider, for example, a researcher trying to model count data $y^{0}_{1},\ldots, y^{0}_{n}$ using a Poisson regression model $$\log(E(Y_{i})) = \mathbf{\beta}'\mathbf{X}_{i}.$$ Suppose that $Y$ are observed sparsely and the researcher thinks that they could have quite easily observed a slightly different data $y'_{i},$ such that $y'_{i} \in (y^{0}_{i} \pm \epsilon_{i})$ and the researcher is able to elicit $\epsilon_{i}$ for each $i.$ In such a case, the researcher could derive the ABC class of priors $\Gamma_{\epsilon}$ and determine that for them

\begin{equation*} 
\Gamma_{\epsilon} = \left \{\pi'(\mathbf{\beta}): \pi'(\mathbf{\beta},t) =  \frac{\pi(\mathbf{\beta}) \exp \left[\sum_{i} \mathbf{\beta}'\mathbf{X}_{i} \times t_{i} \right]}{E_{\pi}\left[ \exp \left[\sum_{i} \mathbf{\beta}'\mathbf{X}_{i} \times t_{i} \right]\right ]}, 0 < |\mathbf{t}| \leq \mathbf{\epsilon}\right \},
\end{equation*}
where $\mathbf{t} = (t_{1},t_{2},\ldots,t_{m})$ and $\mathbf{\epsilon} = (\epsilon_{1},\epsilon_{2},\ldots,\epsilon_{m}).$ They could then sample $\pi'(\mathbf{\beta}) \in \Gamma_{\epsilon}$ using Algorithm A.I. to derive the posterior $\pi'(\mathbf{\beta}|y^{0}_{1},\ldots, y^{0}_{n})$ for each $\pi'$ and thus examine the robustness of their posterior $\pi(\mathbf{\beta}|y^{0}_{1},\ldots, y^{0}_{n})$ to uncertainty in the data.

Thanks to duality, there is yet another way in which the researcher could use the ABC class of priors. Suppose now that the researcher has no concern about the observed data, however, they were not entirely sure that the prior distributions on $\mathbf{\beta}$ were quite accurate. They believe that the prior distributions $\pi(\mathbf{\beta})$ are, say, at most $10\%$ off the mark. They know that this equates (\cite{joshi18}) to $K(\pi,\pi') \leq 0.1.$ Then using \eqref{Kol1}, they determine that they have
\begin{equation} \label{Kol_ex}
 \sup_{\tau\in \Theta} \left|\int^{\tau} \pi(\mathbf{\beta})\left ( 1 -\frac{\exp[ \sum_{i} \mathbf{\beta}'\mathbf{X}_{i} \times t_{i}]}{E_{\pi}\left[ \exp[ \sum_{i} \mathbf{\beta}'\mathbf{X}_{i}\times t_{i}]\right]}\right )\,d\mathbf{\beta}\right| \leq 0.1.
\end{equation}
They can use \eqref{Kol_ex} to (numerically) determine $\pm\mathbf{\epsilon}$ that satisfies the above inequality and thus determine $\Gamma_{\epsilon}.$ They could then sample $\pi'(\mathbf{\beta}) \in \Gamma_{\epsilon}$  using Algorithm A.I. to derive the posterior $\pi'(\mathbf{\beta}|y^{0}_{1},\ldots, y^{0}_{n})$ for each $\pi'$ and thus examine the robustness of their posterior $\pi(\mathbf{\beta}|y^{0}_{1},\ldots, y^{0}_{n})$ to uncertainty in the prior distributions.

While there is likely no unanimity on the definition of the Objective Bayes (OB) analysis (\cite{berger06}), one commonly held view (\cite{bayari07}, \cite{consonni18}) is that an OB method should only use the information contained in the statistical model and no other external information. The ABC class of priors can therefore be considered to be a class of objective prior distributions since the members of the class are defined using the likelihood (or the function of the sufficient statistics) evaluated at the observed data $\mathbf{x^{0}}$ and the maximum deviation from the observed data or sufficient statistics that is permitted. However, note that this class is centered around the original prior distribution $\pi$ and we haven't made any assumption regarding its choice. Thus, the ABC class of priors can be considered to be a class that objectively captures the uncertainty around a possibly subjective prior distribution $\pi.$ As discussed above, this could either be the uncertainty in the observed data or the uncertainty in the elicited prior distributions.

The catch in the definitions of these classes of priors is that the definitions include the likelihood (or the function of the sufficient statistics) evaluated at the observed data $\mathbf{x^{0}}.$ This may seem odd that defining a prior distribution includes the observed data. However, note firstly that, the classes are centered around the original prior $\pi$ which was defined \emph{apriori} and secondly, we are using the ABC classes to perform prior robustness analysis, which is  perfectly reasonable to be done post-data. 

There could also be situations where one wishes to define these classes apriori. In such cases, $\mathbf{x^{0}}$ can be considered to be the data that one \emph{anticipates} observing. One elicits $\mathbf{x^{0}}$ and $\epsilon$ to define any of these classes before observing the data.


For ABC methods, there is, as yet, no objective criteria to decide how big/small $\epsilon$ should be. By showing how, in fact, $\epsilon$ induces the corresponding class of priors may help provide insight into the choice of $\epsilon$.  \cite{joshi18} show how the Kolmogorov distance can be used to elicit possible distortions in the prior distribution.   The Kolmogorov distance function computed in Section \ref{abc-G} can be used to elicit the value of $\epsilon$ that may be acceptable. Thus, this may improve the practical and intuitive understanding of ABC.


The aim of this paper is not computational. As illustrated in Section \ref{algo}, it is possible to sample from the posterior distributions $\pi(\mathbf{\theta}|\mathbf{x'})$  that correspond to the prior distributions $\pi'(\mathbf{\theta})$  that belong to the ABC classes of priors to generate the true posterior distribution. The algorithms described here use importance sampling and as a result will likely only work well for low dimensional $\theta.$ They have been provided here to illustrate the ideas. However, the possible computational advantage of these ideas is that it may be possible to obtain the posterior distribution without having to simulate $\mathbf{x'}$ for each sampled value of $\mathbf{\theta}.$ The algorithms provided here only use the observed data. For problems where data simulation is computationally expensive, this provides a possible alternative for efficient posterior estimation. Of course, to achieve a computational benefit, more efficient algorithms will need to be used to sample from the posterior distributions $\pi(\mathbf{\theta}|\mathbf{x'})$. This could be a potential area for future research.

\section*{Acknowledgements}
Chaitanya Joshi's work was supported by the research funds provided by The University of Waikato. Research was partially performed during the visit of both authors at SAMSI (The Statistical and Applied Mathematical Sciences Institute).

\bibliography{ABC_prior_robustness_Arxiv_April2020}
\bibliographystyle{apalike}

\end{document}